\newtheorem{lemma}{Lemma}
\newtheorem{corollary}{Corollary}
\newcommand{\diag}{\mathop{\mathrm{diag}}}
\begin{document}

\title[Steady-state Behavior of MEWMA]{The Steady-State Behavior of Multivariate Exponentially Weighted Moving Average Control Charts}
	
\author[S. Knoth]{Sven Knoth}
			
\email{knoth@hsu-hh.de}
			
\address{Department of Mathematics and Statistics, Helmut Schmidt University,\\ University of the Federal Armed Forces Hamburg,\\ Holstenhofweg 85, 22043 Hamburg, Germany}

\begin{abstract}
\ Multivariate Exponentially Weighted Moving Average, MEWMA, charts are
popular, handy and effective procedures to detect distributional changes in a stream of multivariate data.
For doing appropriate performance analysis, dealing with the steady-state behavior of the MEWMA statistic is essential.
Going beyond early papers, we derive quite accurate approximations of
the respective steady-state densities of the MEWMA statistic.
It turns out that these densities could be rewritten as the product of two functions depending on one argument only
which allows feasible calculation.
For proving the related statements, the presentation of the
non-central chisquare density deploying the 
confluent hypergeometric limit function is applied.
Using the new methods it was found that
for large dimensions, the steady-state behavior becomes different to what one might expect from the univariate monitoring field.
Based on the integral equation driven methods,
steady-state and worst-case average run lengths are calculated with higher accuracy than before.
Eventually, optimal MEWMA smoothing constants are derived for all considered measures.
\end{abstract}

\keywords{Multivariate Statistical Process Control; Fredholm Integral Equation of the Se\-cond Kind; Nystr\"om Method; Markov Chain Approximation; Non-Central Chisquare Distribution}

\maketitle

%%%%%%%%%%%%%%%%%%%%%%%%%%%%%%%%%%%%%%%%%%%%%%%%%%%%%%%%%%%%%%%%%%%%%%%%%%%%%% 

\section{Introduction}%\label{sec:intro}

Multivariate monitoring tasks result often in some type of Multivariate Exponentially Weighted Moving Average (MEWMA) which was
introduced by \cite{Lowr:Wood:Cham:Rigd:1992} as extension of the even more popular $T^2$ chart proposed initially by \cite{Hote:1947}.
Refer to \cite{Yang:Huan:Lai:Jin:2018} and \cite{Harr:EtAl:2018} for recent applications of MEWMA
in the field of  fault detection in wind turbines and photovoltaic systems, respectively.
In a nutshell, MEWMA charts aim to detecting changes in the distribution (here in the mean) of multivariate data as quickly
as possible while maintaining a reasonable level of false alarms.
The most common operating characteristic of a 
monitoring device alias control chart is the Average Run Length (ARL) introduced already in \cite{Page:1954a}. %, Page:1954c}.
Its typical appearance is often called \textit{zero-state} ARL and refers to the situation
that the state of the control chart at the time of change is known.
To describe this more thoroughly, we take a look at our data model.
Here we consider a sequence of serially independent normally distributed vectors $\bm{X}_1, \bm{X}_2, \ldots$ of dimension $p$, that is
\begin{equation*}
  \bm{X}_n\sim \mathcal{N}(\bm{\mu}, \Sigma) \;\;,\; n = 1,2,\ldots
\end{equation*}
To avoid further complications, we assume that the covariance matrix $\Sigma$ is known,
and the mean vector $\bm{\mu}$ follows the simple change point model: $\bm{\mu} = \bm{\mu}_0$ for $n < \tau$, and $\bm{\mu} = \bm{\mu}_1$ for $n \ge \tau$.
The change point $\tau$ is, of course, unknown, while $\bm{\mu}_0$ is given (either by knowing the process or by estimating during a preliminary study).
The other mean value, $\bm{\mu}_1$,
induces certain choices of control chart parameters. 
%is not desperately needed for constructing the chart.
Following \cite{Lowr:Wood:Cham:Rigd:1992}, the MEWMA sequence $\{\bm{Z}_n\}$ is formed by
\begin{equation}
  \bm{Z}_0 = \bm{z}_0 \;,\;\; \bm{Z}_n = (1-\lambda) \bm{Z}_{n-1} + \lambda \bm{X}_n \quad,\; n = 1, 2, \ldots \;,\;\; 0<\lambda\le 1 \,. \label{eq:mewma1}
\end{equation}
In parallel, we determine the Mahalanobis distance $T_n^2 = (\bm{Z}_n-\bm{\mu}_0)^\prime \Sigma_Z^{-1} (\bm{Z}_n-\bm{\mu}_0)$
from the stable mean $\bm{\mu}_0$,
where $\Sigma_Z$ denotes the asymptotic covariance matrix of $\bm{Z}_n$ with
\begin{equation*}
  \Sigma_Z = \lim_{n\to\infty} Cov(\bm{Z}_n) = \frac{\lambda}{2-\lambda} \Sigma \,.
\end{equation*}
If this distance, $T_n^2$, becomes larger than a given threshold $h_4$ \citep[naming convention stems from][]{Lowr:Wood:Cham:Rigd:1992}, an alarm is triggered which is linked to
the MEWMA stopping time
\begin{equation}
  N = \inf \big\{ n\ge 1: T_n^2 > h_4 \big\} \,. \label{eq:N}
\end{equation}
Its expected value for two exemplary cases, $\tau=1$ or $\tau=\infty$, is just the aforementioned
\textit{zero-state} average run length (ARL), roughly speaking.
In the sequel, this is written as $E_\infty(N)$ (\textit{in-control} case) and $E_1(N)$ (\textit{out-of-control} case)
with the general expression $E_\tau()$ denoting the expectation for given change point $\tau$.
In order to obtain actual numbers, \cite{Lowr:Wood:Cham:Rigd:1992} deployed Monte Carlo simulations,
\cite{Rigd:1995a, Rigd:1995b} provided numerical solutions of ARL integral equations, and
\cite{Rung:Prab:1996} presented a Markov chain approximation.
Recently, \cite{Knot:2017a} demonstrated some accuracy problems of these algorithms and
offered improved numerical solutions of \cite{Rigd:1995a, Rigd:1995b}.
However, only the Monte Carlo and the Markov chain approach are expanded to determine the \textit{steady-state} ARL,
which measures the average number of observations until signal after the change point $\tau$, while
assuming that the sequence $\bm{Z}_n$ reached its steady state before $\tau$.
Namely, \cite{Prab:Rung:1997} utilized the Markov chain model to calculate the \textit{steady-state} ARL.
Their algorithm was used, for example, in \cite{Lee:Khoo:2014a}. However, its deployment is complicated
and, differently to the \textit{zero-state} ARL, no software implementation is published.
Hence, others used Monte Carlo studies, see, for example, \cite{Reyn:Stou:2008} and \cite{Zou:Tsun:2011}.
Before we start to investigate the \textit{steady-state} ARL in more detail, we want to emphasize
its importance as performance indicator of
a monitoring device. Because the actual position of the change point $\tau$ is unknown,
we do not know neither the position of the MEWMA statistic $\bm{Z}_{\tau-1}$ nor its distance to $\bm{\mu}_0$, $T_{\tau-1}^2$, one observation
before the change occurs.
For the mentioned \textit{zero-state} ARL we imply that $\bm{Z}_{\tau-1} = \bm{\mu}_0$ and $T_{\tau-1}^2 = 0$, respectively,
what might be substantially misleading. More appropriate would be to exploit the \textit{steady-state} behavior of
$\bm{Z}_{\tau-1}$ in order to weight in a reasonable way possible positions of $\bm{Z}_{\tau-1}$ and
the resulting detection delay $\sim N-\tau$.
More conservative would be to investigate the \textit{worst-case} position of $\bm{Z}_{\tau-1}$.
Both ways will be treated and finally compared to the classic Hotelling-Shewhart chart which remained popular for
monitoring users being afraid of inertia problems which often escort the application of (M)EWMA.
Fortunately, \cite{Rigd:1995a} indicated that it suffices to study the simple case $\bm{\mu}_0 = \bm{0}$ and $\Sigma = \mathbb{I}$
(identity matrix) by only assuming that the original covariance matrix $\Sigma$ is positive definite.
Hence, in the sequel we set both terms accordingly.

The paper is organized as follows: In Section \ref{sec:stst} the concepts of \textit{steady-state} ARL
are described in more detail while evaluating the \textit{in-control} case. The more involved and much more
important \textit{out-of-control} case is examined in Section \ref{sec:num}. In the subsequent Section \ref{sec:comp}
the framework is applied to illustrate the detection performance of MEWMA using the
\textit{zero-state}, \textit{steady-state} and \textit{worst-case} ARL.
Eventually, the conclusions section completes the paper. Proofs and similar technical
details are collected in the Appendix.

\section{Steady-state methodology and the in-control case}\label{sec:stst}

Measuring the detection delay after reaching some \textit{steady state} was already utilized in \cite{Robe:1966}.
Beginning with \cite{Tayl:1968} and later on with \cite{Yash:1985b} and \cite{Cros:1986}, the concepts were consolidated.
Using the naming conventions of \cite{Cros:1986}, two different types of \textit{steady-state} ARL
are defined in the following way.
The first and presumably more popular one assumes that no (false) alarm is raised before the change takes place.
It is called \textit{conditional steady-state} ARL and could be written as
\begin{equation}
  \mathcal{D} = \lim\limits_{\tau\to\infty} E_\tau \big(N-\tau+1\mid N\ge \tau\big) \,. \label{eq:D1}
\end{equation}
The second one refers to the situation that the change happens after a sequence of false alarms.
The control chart is re-started after each of them so that the \textit{cyclical steady-state} ARL
could be expressed by
\begin{align}
  \mathcal{D}_\star & = \lim\limits_{\tau\to\infty} E_\tau \big(N_\star-\tau+1) \label{eq:D2} \,, \\
  %\text{with } \;
  N_\star & =  N_1 + N_2 + \ldots + N_{I_\tau-1} + N_{I_\tau}
%  \quad \text{ and } \quad
  \; , \,
  I_\tau = \min\left\{i\ge 1: \sum_{j=1}^i N_j \ge \tau\right\} . \nonumber
\end{align}
See \cite{Poll:Tart:2009b}, Section 3, for a more rigor treatment of $\mathcal{D}_\star$ and
asymptotic optimality in the univariate case.
Both \textit{steady-state} ARL types are calculated by combining the (quasi-)stationary distribution
of the control chart statistic and the ARL as function of the actual value of the latter
statistic. To develop our approach, we start with the simpler \textit{in-control} case, where it
is sufficient to consider for both functions only one argument, the distance $T_n^2$.
Recall the ARL integral equation of \cite{Rigd:1995a} with $\alpha = \bm{z}_0^\prime \bm{z_0}$ being
the distance of the initial $\bm{Z}_0$ value to zero in \eqref{eq:mewma1}:
\begin{equation}
  \mathcal{\mathring{L}}(\alpha)
  = 1 + \int_0^h \mathcal{\mathring{L}}(u) \frac{1}{\lambda^2} f_{\chi^2}\left(\frac{u}{\lambda^2} \,\Big|\, p,
  \left[\frac{1-\lambda}{\lambda}\right]^2\!\!\alpha\right) \mathrm{d}u \,,
  \label{eq:L0igl}
\end{equation}
where $\mathcal{\mathring{L}}(\alpha) = E_\infty(N)$ for $\bm{Z}_0 = \bm{z}_0$
(the superscript $\mathring{\,}$ marks the \textit{in-control} case)
and $h = h_4 \lambda/(2-\lambda)$.
The function $f_{\chi^2}(\cdot\mid p, \nu)$ denotes the probability density of the non-central $\chi^2$ distribution with $p$ degrees of freedom and noncentrality
parameter $\nu = \eta \alpha = \left[\frac{1-\lambda}{\lambda}\right]^2\!\!\alpha$.
\cite{Rigd:1995a} solved \eqref{eq:L0igl} numerically by applying the Nystr\"om method \citep{Nyst:1930}
with Gau\ss{}-Radau quadrature.
Recently, \cite{Knot:2017a} utilized the slightly more powerful Gau\ss{}-Legendre quadrature after
a change in variables from $\alpha$ to $\alpha^2$ which improves the accuracy for odd $p$ substantially.
A similar integral equation is valid for the left eigenfunction $\mathring{\psi}()$,
the quasi-stationary density of $Z_{\tau-1}$, which is needed for the 
\textit{conditional steady-state} ARL $\mathcal{D}$:
\begin{equation}
  \varrho \mathring{\psi}(u) =
  \int_0^h \mathring{\psi}(\alpha) \frac{1}{\lambda^2} f_{\chi^2}\left(\frac{u}{\lambda^2} \,\Big|\, p,\eta \alpha\right) \mathrm{d}\alpha \,.
  \label{eq:psi0aigl}
\end{equation}
Refer to \cite{Knot:2016a} for more details about the family of integral equations to calculate the ARL function
$\mathcal{\mathring{L}}()$ and the left eigenfunction $\mathring{\psi}()$ in case of univariate EWMA charts.
A similar list is given in \cite{Mous:Polu:Tart:2009} for CUSUM and Shiryaev-Roberts schemes.
The parameter $\varrho$ is just the dominating eigenvalue of the integral kernel in \eqref{eq:psi0aigl} which provides essential information about the long running behavior
of the MEWMA stopping time $N$ in the \textit{in-control} case --- $P_\infty(N = n\mid N\ge n) \approx 1 - \varrho$ for large $n$
\citep[classical paper is, for example,][]{Gold:1989}.
Applying the same change in variables as
performed in \cite{Knot:2017a} for \eqref{eq:L0igl}, one obtains
\begin{equation}
  \varrho \mathring{\psi}_i
  = \sum_{j=1}^r w_j \mathring{\psi}_j \frac{1}{\lambda^2} f_{\chi^2}\left(\frac{z_i^2}{\lambda^2} \,\Big|\, p, \eta z_j^2\right) 2 z_j
  \qquad\text{ with } \mathring{\psi}_i = \mathring{\psi}(z_i^2) \,,
  \label{eq:psi0ale}
\end{equation}
where $w_j$ and $z_j$ are the weights and nodes of the Gau\ss{}-Legendre quadrature. The system \eqref{eq:psi0ale}
could be solved either by the power method \citep{Mise:Poll:1929} or by applying readily available routines such
as \verb|eigen()| in the statistics software system \textsf{R} which calls eventually well established procedures from the
BLAS \citep{Laws:Hans:Kinc:Krog:1979} or LAPACK \citep{Ande:EtAl:1999} libraries (more information on \url{http://www.netlib.org}).
Collecting the numerical solutions of \eqref{eq:L0igl} and \eqref{eq:psi0ale} in matrices and vectors we can write
\begin{equation*}
  \bm{\mathring{\ell}} = (\mathbb{I} - \mathbb{Q}_\mathcal{L})^{-1} \bm{1}
  \quad \text{ and } \quad
  \varrho \bm{\mathring{\psi}} = \mathbb{Q}_\psi \bm{\mathring{\psi}} \,, \\
\end{equation*}
and obtain
\begin{equation*}
  \mathring{\mathcal{D}} = \frac{(\mathbb{W}\bm{\mathring{\psi}})^\prime \bm{\mathring{\ell}}}{(\mathbb{W}\bm{\mathring{\psi}})^\prime \bm{1}} \label{eq:D1approx}
  \quad \text{ with } \quad \mathbb{W} = \diag(2 w_i z_i)
\end{equation*}
as numerical counterpart to \eqref{eq:D1}.
More details are given in the Appendix.
Next, we derive the integral equation for the 
\textit{cyclical steady-state} ARL according to \eqref{eq:D2}. The now stationary density (linked to the eigenvalue $\varrho = 1$) follows
\begin{equation}
  \mathring{\psi}_\star(u) =
  \Psi_0 \frac{1}{\lambda^2} f_{\chi^2}\left(\frac{u}{\lambda^2} \,\Big|\, p\right) +
  \int_0^h \mathring{\psi}_\star(\alpha) \frac{1}{\lambda^2} f_{\chi^2}\left(\frac{u}{\lambda^2} \,\Big|\, p, \eta \alpha\right) \mathrm{d}\alpha \,,
  \label{eq:psi0bigl}
\end{equation}
which results after plugging in the Gau\ss{}-Legendre quadrature in a common linear equation system,
\begin{equation}
  \bm{\mathring{\psi}}_\star = (\mathbb{I} - \mathbb{Q}_\psi)^{-1} \bm{f}  \quad \text{ and } \quad \bm{f} = (f_1, \ldots, f_r)^\prime \;,\;\;
  f_i = \frac{\Psi_0}{\lambda^2} f_{\chi^2}\left(\frac{z_i^2}{\lambda^2} \,\Big|\, p\right) \,. \label{eq:psi0ble}
\end{equation}
The additional parameter $\Psi_0$ labels the probability of $\bm{Z}_{\tau-1} = \bm{z}_0 = \bm{0}$  for $\tau \to \infty$ and equals to $1/E_\infty(N)$
because the restart in $\bm{z}_0$ is a renewal point, cf. \cite{Knot:2016a} for more details. The solution in \eqref{eq:psi0ble} complies
$(\mathbb{W}\bm{\mathring{\psi}}_\star)^\prime \bm{1} + \Psi_0 = 1$ by construction so that we can write
\begin{equation*}
  \mathring{\mathcal{D}}_\star = \Psi_0 \, \mathcal{\mathring{L}}(0) + (\mathbb{W}\bm{\mathring{\psi}}_\star)^\prime \bm{\ell} \label{eq:D2approx}
\end{equation*}
as numerical approximation of the \textit{cyclical steady-state} ARL.
Note that being in the \textit{in-control} case, we receive $\Psi_0 \, \mathcal{\mathring{L}}(0) = 1$,
because $\mathcal{\mathring{L}}(0) = E_\infty(N)$.
In the here following Figure~\ref{fig:psi} we illustrate the shape of $\mathring{\psi}()$ and $\mathring{\psi}_\star()$
for dimensions $p\in\{2,3,4,10\}$ and $E_\infty(N) = 200$.
\begin{figure}[hbtp]
	
\subfloat[$p=2$]{\includegraphics[width=.51\textwidth]{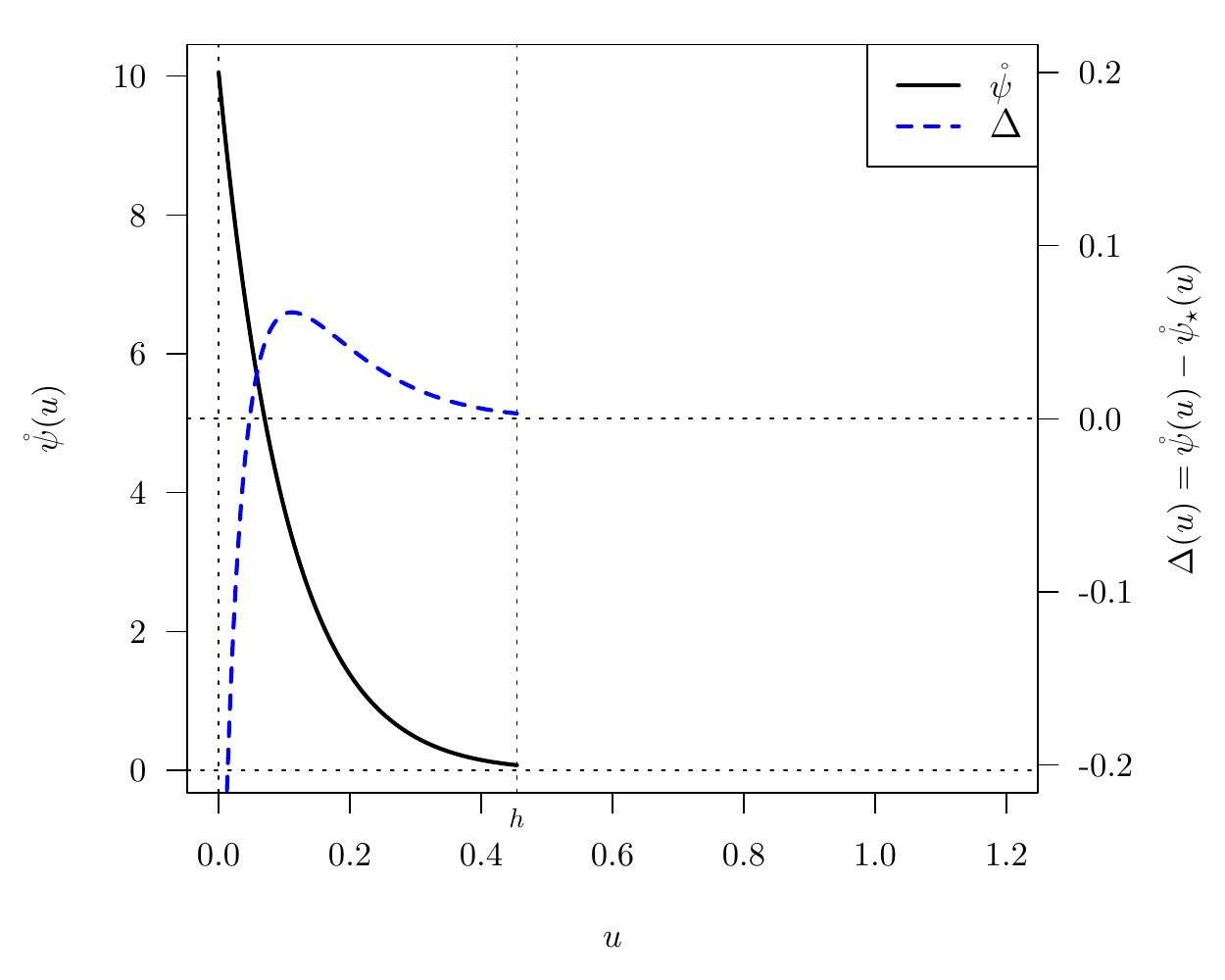}}
\subfloat[$p=3$]{\includegraphics[width=.51\textwidth]{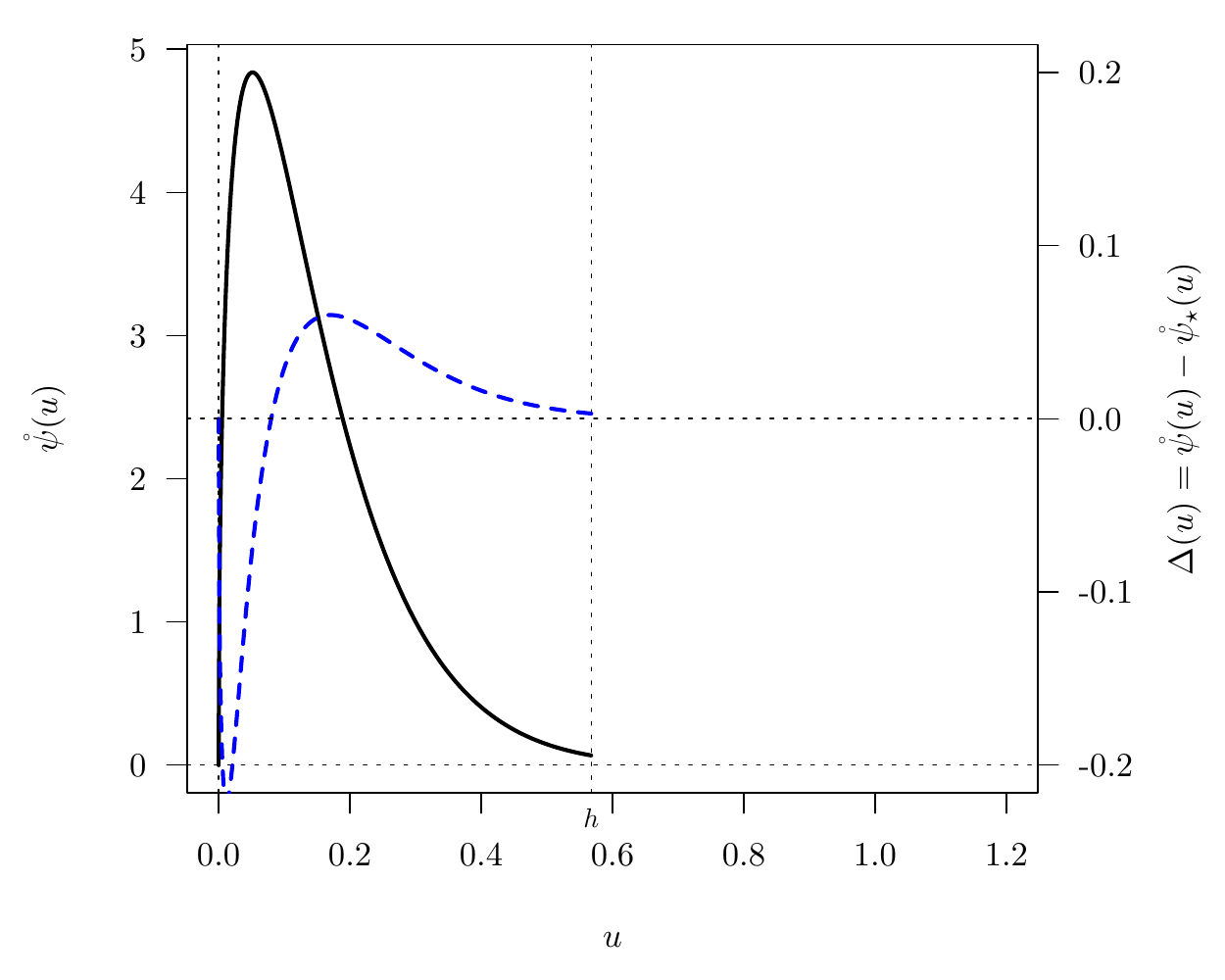}}
	
\subfloat[$p=4$]{\includegraphics[width=.51\textwidth]{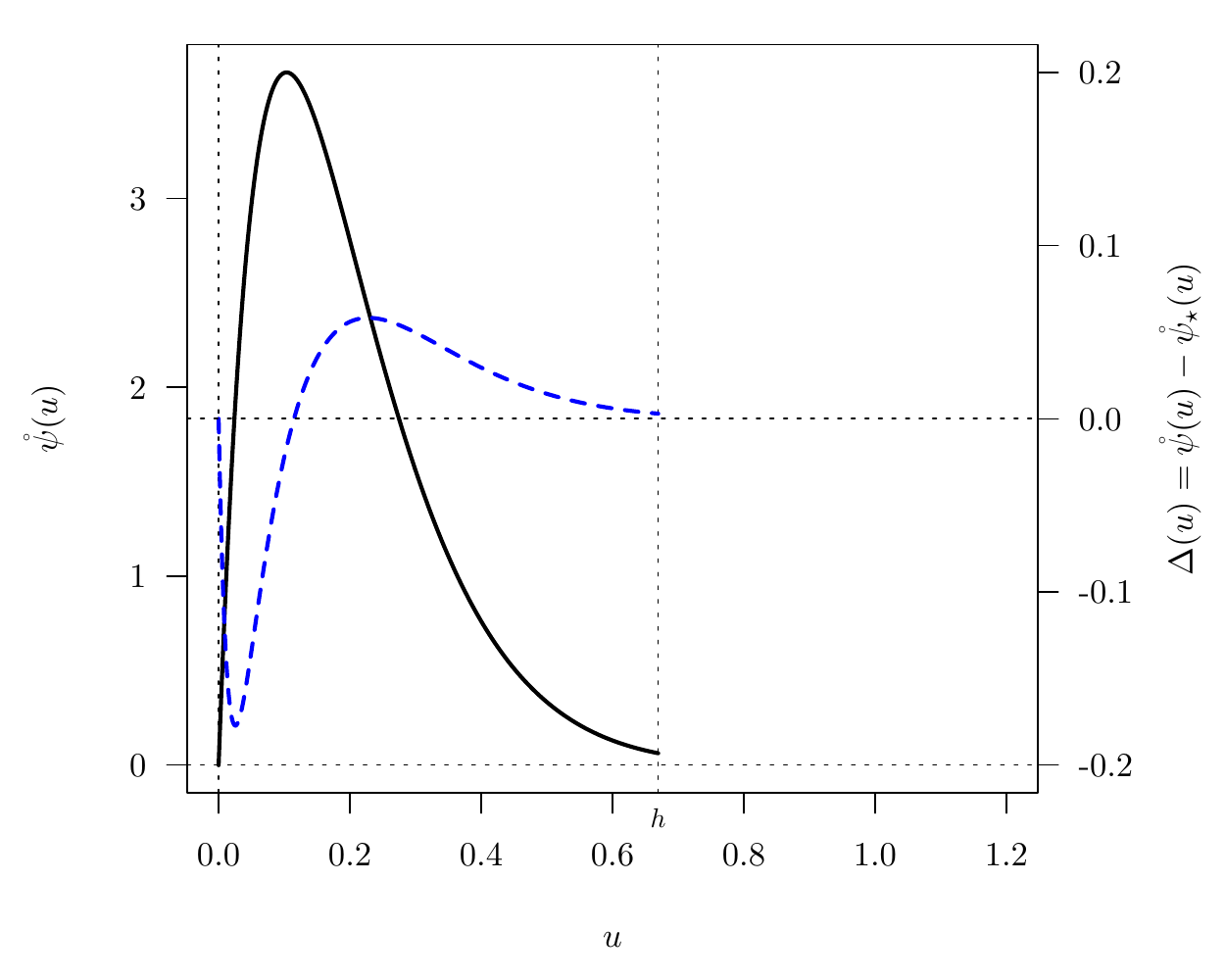}}
\subfloat[$p=10$]{\includegraphics[width=.51\textwidth]{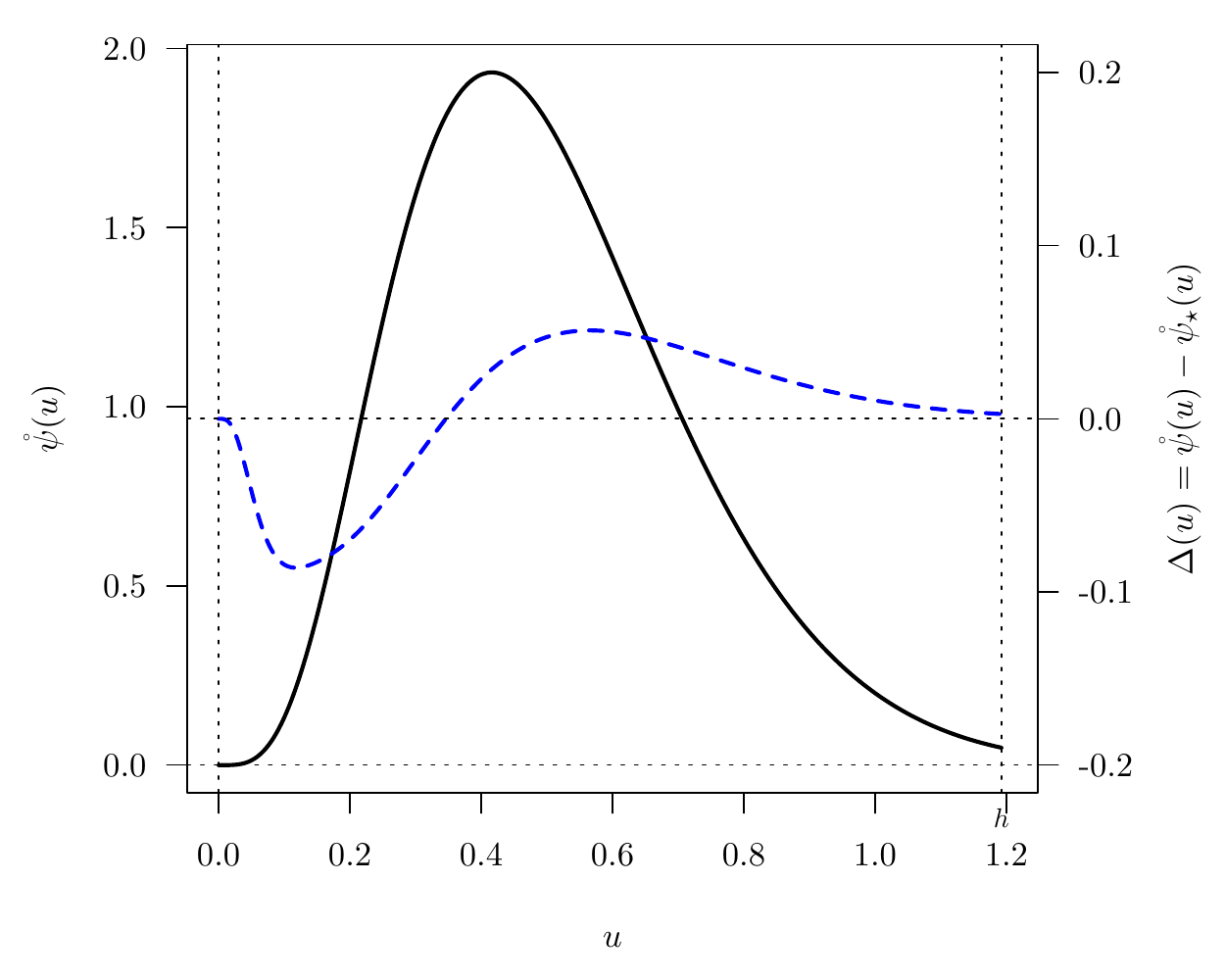}}
	
\caption{Left eigenfunctions, $\mathring{\psi}()$ and $\mathring{\psi}_\star()$
(actually, $\Delta() = \mathring{\psi}() - \mathring{\psi}_\star()$ is plotted),
needed to calculate both \textit{steady-state} ARL types, $\mathring{\mathcal{D}}$ and $\mathring{\mathcal{D}}_\star$, in the \textit{in-control} case
for selected dimensions $p$, smoothing constant $\lambda = 0.1$, and $E_\infty(N) = 200$.}\label{fig:psi}
\end{figure}
Because the differences between $\mathring{\psi}()$ and $\mathring{\psi}_\star()$ are quite small,
we plot $\mathring{\psi}()$ and $\Delta() = \mathring{\psi}() - \mathring{\psi}_\star()$.
The corresponding thresholds for $\lambda=0.1$ are $h_4 \in \{8.63, 10.78, 12.72, 22.66\}$.
The smallest dimension, $p=2$, yields a special shape of $\mathring{\psi}()$ which looks similar to an exponential distribution, while
for all other $p$ we observe $\mathring{\psi}(0) = 0$ and a pronounced mode at $u > 0$ which increases with $p$. 
Because the re-start for the \textit{cyclical} case takes place at $\bm{z}_0 = \bm{0}$ yielding $u = 0$,
$\mathring{\psi}_\star(u)$ is larger than  $\mathring{\psi}(u)$ for small $u$.
For larger values of $u$, $\mathring{\psi}(u)$ is mildly larger than $\mathring{\psi}_\star(u)$.
Note that the area under $\mathring{\psi}()$ is equal to 1, while it is $1 - \Psi_0 = 0.995$ for
$\mathring{\psi}_\star()$, because of the atom at $u = 0$.
The latter is the \textit{steady-state} probability of $\bm{Z}_{\tau-1} = \bm{0}$ which is equal to $\Psi_0 = 1/200$.
In addition, we provide some numerical results for $\mathring{\mathcal{D}}$ and $\mathring{\mathcal{D}}_\star$
%in the \textit{in-control} case
for different $\lambda$ and $p$ in Table~\ref{tab:01}.
\begin{table}[hbt]
\centering
\small\renewcommand{\arraystretch}{1.4}
\caption{\textit{In-control steady-state} ARL results for various $p$,
smoothing constants $\lambda \in \{0.05, 0.1, 0.2\}$, and $E_\infty(N) = 200$.}\label{tab:01}	
\begin{tabular}{c|ccccccc}  \hline
  & \multicolumn{7}{|c}{$p$} \\ \hline
  ARL & 2 & 3 & 4 & 5 & 10 & 20 & 50 \\ \hline
  \multicolumn{8}{c}{$\lambda = 0.05$} \\ \hline
  $\mathring{\mathcal{D}}$         & 187.0 & 185.5 & 184.4 & 183.5 & 180.8 & 178.0 & 174.1 \\
  $\mathring{\mathcal{D}}_\star$   & 187.6 & 186.2 & 185.2 & 184.4 & 181.9 & 179.4 & 176.0 \\ \hline 
  \multicolumn{8}{c}{$\lambda = 0.1$} \\ \hline
  $\mathring{\mathcal{D}}$       & 192.6 & 191.8 & 191.3 & 190.9 & 189.6 & 188.3 & 186.4 \\
  $\mathring{\mathcal{D}}_\star$ & 192.7 & 192.1 & 191.6 & 191.2 & 190.0 & 188.7 & 186.9 \\ \hline
  \multicolumn{8}{c}{$\lambda = 0.2$} \\ \hline
  $\mathring{\mathcal{D}}$       & 196.1 & 195.8 & 195.6 & 195.4 & 194.8 & 194.2 & 193.3 \\
  $\mathring{\mathcal{D}}_\star$ & 196.2 & 195.9 & 195.7 & 195.5 & 194.9 & 194.3 & 193.5 \\  \hline
\end{tabular}
\end{table}
We notice that the \textit{steady-state} ARL gets closer to the \textit{zero-state} ARL for increasing
$\lambda$. Moreover, the differences between $\mathring{\mathcal{D}}$ and $\mathring{\mathcal{D}}_\star$ vanish simultaneously.
On the other hand, for increasing dimension $p$, the \textit{steady-state} ARL decreases.
Closing this section, we want to note two points. First, both types of the \textit{in-control steady-state}
ARL could be understood as expected number of observations until a false signal,
told to a person (either the control chart owner or just a ``witness'') who checks the state of a MEWMA
chart on an arbitrary day knowing only that ``no alarm occurred so far'' (\textit{conditional})
or ``only false alarms occurred'' (\textit{cyclical}). Second, \cite{Prab:Rung:1997} reported in their
Table 3 (rows with $\delta = 0$) corresponding numbers that are surprisingly close to 200, the \textit{zero-state} value.
The Monte Carlo studies we performed in Section~\ref{sec:comp} confirm essentially our numbers.
%so that we conjecture that the older results are presumably wrong.
In the next section we turn to the more involved \textit{out-of-control} case.

\section{Analysis of the out-of-control case}\label{sec:num}

First, we recall the framework of \cite{Prab:Rung:1997}.
They claimed to \textit{``provide conditional steady-state ARLs in Table 3.''}
In order to do this, they took the transition matrix $\mathbb{P}_0$
(counterpart to our matrix $\mathbb{Q}_\mathcal{L}$, but $\mathbb{P}_0$ refers to a bivariate Markov chain),
calculated $\bm{v}^\prime = \bm{b}^\prime (\mathbb{I} - \mathbb{P}_0)^{-1}$ and created by $\bm{q} = \bm{v} / (\bm{v}^\prime \bm{1})$
the needed \textit{steady-state} distribution. Remind that the vector $\bm{b}$ consists of only zeros except for the component that
corresponds to the starting state ($\bm{z}_0$), which is set to 1.
Following \cite{Darr:Sene:1965}, \cite{Prab:Rung:1997} rather determined the \textit{cyclical steady-state} distribution, and consequently $\mathcal{D}_\star$.
Later on, we will see that this is, of course, mathematically important, but in terms of actual numbers less relevant.
Here we calculate both types, $\mathcal{D}$ and $\mathcal{D}_\star$, while connecting them carefully to their theoretical origins.
In addition, we propose an algorithm which avoids the problem of dealing with such large matrices like $\mathbb{P}_0$ which
stems from a bivariate Markov chain.

Contrary to \cite{Prab:Rung:1997}, we start from the (double) integral equation developed in \cite{Rigd:1995b} for the \textit{out-of-control zero-state} ARL,
which is a function of two arguments. Besides the already introduced $\alpha= \bm{z}_0^\prime \bm{z}_0$ we utilize
as second argument $\beta=\bm{\mu}_1^\prime \bm{z}_0$.
Recall also the value $\delta = \bm{\mu}_1^\prime \bm{\mu}_1$ which quantifies the magnitude of the change.
From \cite{Rigd:1995b} we present for the \textit{out-of-control zero-state} ARL ($\alpha\in[0,h]$, $\beta^2 \le \alpha \delta$)
\begin{align*}
  \mathcal{L}(\alpha,\beta) &
  = 1 + \int_{-\sqrt{\delta h}}^{\sqrt{\delta h}} \int_{v^2/\delta}^h \mathcal{L}(u,v) K(u,v;\alpha,\beta) \,\mathrm{d}u\,\mathrm{d}v  \,, \\
  K(u,v;\alpha,\beta) & = \frac{1}{\sqrt{2\pi\delta\lambda^2}}
  e^{-\frac{[v-\lambda\delta-(1-\lambda)\beta]^2}{2\delta\lambda^2}} \\
  & \qquad \times
  \frac{1}{\lambda^2} f_{\chi^2}\left(\!\frac{u-v^2/\delta}{\lambda^2} \,\Big|\, p-1, \eta (\alpha-\beta^2/\delta)\!\right) \,.
\end{align*}
Regarding the numerical solution of this double integral equation we refer to \cite{Rigd:1995b} and \cite{Knot:2017a}.
For preparing the left eigenfunction equation, we change the integration order and the second argument.
Writing $\theta$ for the angle between the initial MEWMA value $\bm{z}_0$ and the new mean vector $\bm{\mu}_1$, we
deduce $\beta = \sqrt{\alpha} \sqrt{\delta} \cos(\theta)$.
The new second argument is set by $\gamma = -\cos(\theta)$ for convenience (more details see the Appendix)
so that we get the following double integral equation:
\begin{align}
  \mathcal{L}(\alpha,\gamma) & = 1 + \int_0^h \int_{-1}^1 \mathcal{L}(u,w) K^\dag(u,w;\alpha,\gamma) \,\mathrm{d}w\,\mathrm{d}u \,, \label{eq:biLnew} \\
  K^\dag(u,w;\alpha,\gamma) & = \frac{\sqrt{u}}{\sqrt{2\pi\lambda^2}}
  e^{-\frac{[\sqrt{u} w - \lambda\sqrt{\delta} - (1-\lambda)\sqrt{\alpha} \gamma]^2}{2\lambda^2}} \nonumber \\
  & \qquad \times
  \frac{1}{\lambda^2} f_{\chi^2}\left(\frac{u(1-w^2)}{\lambda^2} \,\Big|\, p-1, \eta \alpha(1-\gamma^2) \right) \,. \nonumber
\end{align}  
Studying \eqref{eq:biLnew} we conclude that we have to account for the (quasi-)stationary distributions of both
the distance to zero, $\alpha$, and the angle $\theta$ ($\gamma = -\cos(\theta)$ links $\theta$ and $\gamma$) between the new mean $\bm{\mu}_1$ and the
MEWMA statistic $\bm{Z}_{\tau-1}$.
Similarly to the transition from $\mathcal{\mathring{L}}(\alpha)$ to $\mathring{\psi}(u)$ we derive the integral equation for $\psi(u,w)$,
\begin{equation}
  \varrho \psi(u,w) = \int_0^h \int_{-1}^{1} \psi(\alpha,\gamma) K^\dag(u,w;\alpha,\gamma) \,\mathrm{d}\gamma\,\mathrm{d}\alpha \,,
  \label{eq:psi2aigl}
\end{equation} 
and for the \textit{cyclical} case (restart at $\bm{z_0} = \bm{0}$)
\begin{equation}  
  \psi_\star(u,w) = \Psi_0 K^\dag(u,w;0,0)
  + \int_0^h \int_{-1}^{1} \psi_\star(\alpha,\gamma) K^\dag(u,w;\alpha,\gamma) \,\mathrm{d}\gamma\,\mathrm{d}\alpha \,.
  \label{eq:psi2bigl}
\end{equation}
Because we evaluate both eigenfunction
equations for the \textit{in-control} parameter setup analogously to \cite{Prab:Rung:1997}, we set $\delta=0$ and simplify
\begin{align*}  
  K^\dag(u,w;\alpha,\gamma) & = \ldots e^{-\frac{[\sqrt{u} w - (1-\lambda)\sqrt{\alpha} \gamma]^2}{2\lambda^2}}  \ldots \,, \\
  K^\dag(u,w;0,0)
  %& = \frac{\sqrt{u}}{\sqrt{2\pi\lambda^2}} e^{-\frac{[\sqrt{u} w]^2}{2\lambda^2}} \frac{1}{\lambda^2} f_{\chi^2}\left(\frac{u(1-w^2)}{\lambda^2} %\,\Big|\, p-1 \right) \\
  & = \frac{\Gamma(\frac{p}{2})}{\Gamma(\frac{p-1}{2}) \sqrt{\pi}} (1-w^2)^{\frac{p-3}{2}} \times \frac{1}{\lambda^2} f_{\chi^2}\left(\frac{u}{\lambda^2} \,\Big|\, p \right) \,.
\end{align*}
In the sequel, we will demonstrate that the solutions of both integral equations are degenerated.
%For example, in the \textit{conditional} case we observe $\psi(u,\gamma) = \psi(u) \times d(\gamma)$,
%where $\psi(u)$ is the solution of \eqref{eq:psi1} and
%$d(\gamma)$ is related to the distribution of the angle between two randomly picked vectors on the unit sphere.
Interestingly, the second term of $K^\dag(u,w;0,0)$ coincides with the factor at $\Psi_0$ in the integral equation \eqref{eq:psi0bigl}.

Now we merely assume that $\psi(u,\gamma) = \mathring{\psi}(u) \times d(\gamma)$ where $d(\gamma)$ will be defined below.
Moreover, we heuristically proceed and conjecture that the projection of $\bm{Z}_{\tau-1}$ to the unit sphere 
$S^{p-1} = \{ \bm{x} \in \mathds{R}^p\!: \Vert \bm{x}\Vert = 1\}$ results into a uniform distribution on $S^{p-1}$.
Given a spherical distribution (multivariate normal is one prominent example), it would be a well-known result, see, e.\,g., \cite{Muir:1982}, Theorem 1.5.6.
The unrestricted (neither conditioning on $N \ge \tau$ nor re-starting after false alarm) sequence $\bm{Z}_i$
follows a multivariate normal distribution with mean $\bm{0}$ and covariance matrix $\lambda\big(1-(1-\lambda)^{2i}\big)/(2-\lambda) \mathbb{I}$.
Within this framework, there exists a beneficial result for the distribution of the angle $\theta$ between one fixed and one uniformly
chosen point or two uniformly chosen points on the unit sphere. Following \cite{Muir:1982}, Theorem 1.5.5 we conclude for the
density of $\theta$
\begin{equation*}
  d(\theta) = \frac{\Gamma(\frac{p}{2})}{\Gamma(\frac{p-1}{2}) \sqrt{\pi}} \, \sin(\theta)^{p-2}
  % = \frac{\sin(\theta)^{p-2}}{B\big(1/2, (p-1)/2\big)} 
\end{equation*}
with the special case $d(\theta) = 1/\pi$ for $p=2$.
A simple sketch of proof is given in the Appendix.
Rewriting $d()$ as function of $\gamma = - \cos(\theta)$ yields 
\begin{equation}
  d(\gamma) = \frac{\Gamma(\frac{p}{2})}{\Gamma(\frac{p-1}{2}) \sqrt{\pi}} \, (1-\gamma^2)^{\frac{p-3}{2}}
  = \frac{(1-\gamma^2)^{\frac{p-3}{2}}}{B\big(1/2, (p-1)/2\big)} 
  \qquad,\;\; \gamma \in (-1,1] \label{eq:law:gamma}
\end{equation}
including the special case $d(\gamma) = 1/2$ for $p = 3$. Note that the square of $\gamma$ follows
a beta distribution with parameters $1/2$ and $(p-1)/2$ (see the denominator of the second ratio which represents the corresponding beta function).

Collecting the results we achieved so far, we formulate the following lemma.

\begin{lemma}\label{LEM:01}
For a MEMWA chart following \eqref{eq:mewma1} and \eqref{eq:N}, the solution $\psi(u,w)$ of \eqref{eq:psi2aigl} could be established
by combining $\mathring{\psi}(u)$ from \eqref{eq:psi0aigl} and $d(\gamma)$ in \eqref{eq:law:gamma}:
\begin{equation}
  \psi(u,w) = d(w) \times \mathring{\psi}(u) \,. \label{eq:paira}
\end{equation}
\end{lemma}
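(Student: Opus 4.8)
The plan is to verify the factorization \eqref{eq:paira} by substitution, reducing the double integral equation \eqref{eq:psi2aigl} to the univariate equation \eqref{eq:psi0aigl} once the angular variable has been integrated out. Inserting the ansatz $\psi(\alpha,\gamma)=d(\gamma)\,\mathring{\psi}(\alpha)$ into the right-hand side of \eqref{eq:psi2aigl} and using that $\mathring{\psi}$ carries no $\gamma$-dependence, the whole statement reduces to the angular marginalization identity
\begin{equation}
  \int_{-1}^{1} d(\gamma)\, K^\dag(u,w;\alpha,\gamma)\,\mathrm{d}\gamma
  \;=\; d(w)\,\frac{1}{\lambda^2}\, f_{\chi^2}\!\left(\frac{u}{\lambda^2}\,\Big|\,p,\eta\alpha\right)\,.
  \label{eq:angmarg}
\end{equation}
Granting \eqref{eq:angmarg}, the inner $\gamma$-integral reproduces $d(w)$ times the exact kernel of \eqref{eq:psi0aigl}, the remaining $\alpha$-integral against $\mathring{\psi}$ returns $\varrho\,\mathring{\psi}(u)$ by \eqref{eq:psi0aigl}, and hence the right-hand side of \eqref{eq:psi2aigl} equals $d(w)\,\varrho\,\mathring{\psi}(u)=\varrho\,\psi(u,w)$ — forcing the bivariate dominating eigenvalue to equal the \emph{same} $\varrho$ as in the univariate case, which is consistent with the no-alarm constraint being carried solely by the domain $\alpha\in[0,h]$ (identical in both equations and absent from \eqref{eq:angmarg}). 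Since $d>0$ on $(-1,1]$ and $\mathring{\psi}$ is the Perron eigenfunction of \eqref{eq:psi0aigl}, the nonnegative product \eqref{eq:paira} inherits the Perron property for $K^\dag$ and is therefore the relevant quasi-stationary solution.

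The crux is \eqref{eq:angmarg}, which I would establish probabilistically rather than by special-function manipulation. Recall from the construction of $K^\dag$ (with $\delta=0$) that $K^\dag(u,w;\alpha,\gamma)\,\mathrm{d}w\,\mathrm{d}u$ is the conditional law of the pair $\bigl(\|\bm{Z}_n\|^2,\, -\cos\angle(\bm{Z}_n,\bm{e}_1)\bigr)$ given $\|\bm{Z}_{n-1}\|^2=\alpha$ and $-\cos\angle(\bm{Z}_{n-1},\bm{e}_1)=\gamma$, where $\bm{e}_1$ is a fixed reference direction and $\bm{Z}_n=(1-\lambda)\bm{Z}_{n-1}+\lambda\bm{X}_n$, $\bm{X}_n\sim\mathcal{N}(\bm{0},\mathbb{I})$. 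By \eqref{eq:law:gamma} (that is, \cite{Muir:1982}, Theorem~1.5.5), weighting by $d(\gamma)$ is exactly the statement that the direction $\bm{Z}_{n-1}/\|\bm{Z}_{n-1}\|$ is uniform on $S^{p-1}$. Under that randomization $(1-\lambda)\bm{Z}_{n-1}$ is rotationally invariant, and adding the independent rotationally invariant noise $\lambda\bm{X}_n$ keeps $\bm{Z}_n$ rotationally invariant; hence $\|\bm{Z}_n\|$ and $\bm{Z}_n/\|\bm{Z}_n\|$ are independent, the latter uniform on $S^{p-1}$, so $-\cos\angle(\bm{Z}_n,\bm{e}_1)$ has density $d$ and is independent of $\|\bm{Z}_n\|^2$. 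To identify the marginal of $u=\|\bm{Z}_n\|^2$: conditionally on $\bm{e}=\bm{Z}_{n-1}/\|\bm{Z}_{n-1}\|$ one has $\bm{Z}_n/\lambda\sim\mathcal{N}\bigl(\tfrac{1-\lambda}{\lambda}\sqrt{\alpha}\,\bm{e},\mathbb{I}\bigr)$, so $\|\bm{Z}_n\|^2/\lambda^2$ is non-central $\chi^2$ with $p$ degrees of freedom and noncentrality $\|\tfrac{1-\lambda}{\lambda}\sqrt{\alpha}\,\bm{e}\|^{2}=\eta\alpha$, a value independent of $\bm{e}$; the same law then holds unconditionally, giving $u$ the density $\lambda^{-2}f_{\chi^2}(u/\lambda^2\mid p,\eta\alpha)$. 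Multiplying the two independent marginals yields the right-hand side of \eqref{eq:angmarg}.

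The step I expect to be fussiest is making the polar change of variables underlying $K^\dag$ fully rigorous — decomposing $\bm{Z}_n$ into its component along $\bm{e}_1$ and its orthogonal part, mapping $(\text{axial coordinate},\ \text{orthogonal squared norm})\mapsto(u,w)$, and checking that the associated Jacobian $\sqrt{u}$ matches the prefactor of $K^\dag$ — but this is already implicit in the derivation of $K^\dag$ from \cite{Rigd:1995b}, so no genuinely new computation arises. For completeness I would also append a purely analytic verification of \eqref{eq:angmarg}: expand $f_{\chi^2}(\cdot\mid p-1,\cdot)$ through the confluent hypergeometric limit function (as flagged in the abstract), so that $K^\dag$ becomes a Gaussian in $\sqrt{u}w$ times $(1-w^2)^{(p-3)/2}$ times a power series in $\alpha(1-\gamma^2)$, and integrate term by term in $\gamma$ against $(1-\gamma^2)^{(p-3)/2}$ using the Beta integral; the series resums to the $p$-degree non-central $\chi^2$ density and fixes the normalizing constant of $d(w)$.
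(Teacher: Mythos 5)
Your proposal is correct, and after the (identical) substitution step it diverges from the paper in how the crucial angular marginalization is established. The paper proves the equivalent identity $\int_{-1}^{1}(1-\gamma^2)^{\frac{p-3}{2}}K^\dag(u,w;\alpha,\gamma)\,\mathrm{d}\gamma=(1-w^2)^{\frac{p-3}{2}}\lambda^{-2}f_{\chi^2}\big(u/\lambda^2\mid p,\eta\alpha\big)$ analytically: it expands the non-central $\chi^2$ factor of $K^\dag$ via the ${_0}F_1$ representation, observes that $(1-\gamma^2)^{\frac{p-3}{2}}K^\dag(u,w;\alpha,\gamma)$ is then symmetric under swapping the roles of $w$ and $\gamma$ (pulling out $(1-w^2)^{\frac{p-3}{2}}$), substitutes $\tilde\gamma=\sqrt{u}\,\gamma$, and recognizes the remaining $\gamma$-integral as the convolution of a $\mathcal{N}\big((1-\lambda)\sqrt{\alpha}w,\lambda^2\big)$ square with an independent non-central $\chi^2_{p-1}$, which resums to $\chi^2_{p,\eta\alpha}$. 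Your argument replaces the ${_0}F_1$ swap by the underlying probabilistic fact directly: randomizing the direction of $\bm{Z}_{\tau-1}$ uniformly on the sphere (which is what weighting by $d(\gamma)$ does) makes $\bm{Z}_\tau$ spherically distributed, so norm and direction separate, with the norm's noncentrality $\eta\alpha$ not depending on the conditioning direction. This is cleaner and explains \emph{why} the factorization holds (the last display of the paper's appendix proof is in fact the same convolution fact you invoke, just arrived at after the algebra); what it costs is that you must justify the identification of $K^\dag$ as the transition density of the pair $\big(\Vert\bm{Z}_n\Vert^2,-\cos\angle(\bm{Z}_n,\bm{e}_1)\big)$, including the Jacobian $\sqrt{u}$ — you flag this correctly, and it is indeed already contained in the paper's derivation of $K^\dag$ from Rigdon's double integral equation, so nothing is missing. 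Your closing remark on the Perron property of the positive product $d(w)\mathring{\psi}(u)$ goes slightly beyond what the lemma asserts (the paper only verifies that \eqref{eq:paira} is \emph{a} solution with the same $\varrho$) and is a welcome addition rather than a gap.
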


\begin{proof}
We start with inserting \eqref{eq:paira} into \eqref{eq:psi2aigl} on the right-hand side.
\begin{align*}
  \varrho \psi(u,w)
  %& = \int_0^h \int_{-1}^{1} \psi(\alpha,\gamma) K^\dag(u,w;\alpha,\gamma) \,\mathrm{d}\gamma\,\mathrm{d}\alpha \\
  & = \int_0^h \mathring{\psi}(\alpha) \int_{-1}^{1} d(\gamma) K^\dag(u,w;\alpha,\gamma) \,\mathrm{d}\gamma\,\mathrm{d}\alpha \\
  & = \frac{\Gamma(\frac{p}{2})}{\Gamma(\frac{p-1}{2})\sqrt{\pi}} \int_0^h  \mathring{\psi}(\alpha)
    \int_{-1}^{1} (1-\gamma^2)^{\frac{p-3}{2}} K^\dag(u,w;\alpha,\gamma) \,\mathrm{d}\gamma\,\mathrm{d}\alpha \,.
\end{align*}
In the Appendix we prove by utilizing a representation of the non-central $\chi^2$ density including
the confluent hypergeometric limit function ${_0} F_1()$ that
\begin{align*}
  \int_{-1}^{1} (1-\gamma^2)^{\frac{p-3}{2}} K^\dag(u,w;\alpha,\gamma) \,\mathrm{d}\gamma
  & = (1-w^2)^{\frac{p-3}{2}} \frac{1}{\lambda^2} f_{\chi^2}\left(\frac{u}{\lambda^2} \,\Big|\, p, \eta \alpha\right) \,.
\end{align*}
Deploying this result simplifies the above double integral to 
\begin{align*}
 \ldots
 & = \frac{\Gamma(\frac{p}{2})}{\Gamma(\frac{p-1}{2})\sqrt{\pi}} (1-w^2)^{\frac{p-3}{2}}
   \int_0^h \mathring{\psi}(\alpha) \frac{1}{\lambda^2} f_{\chi^2}\left(\frac{u}{\lambda^2} \,\Big|\, p, \eta \alpha\right) \, \mathrm{d}\alpha \\
 & = d(w) \times \varrho \mathring{\psi}(u)
\end{align*}
because of \eqref{eq:law:gamma} and \eqref{eq:psi0aigl}. Thus, \eqref{eq:paira} solves \eqref{eq:psi2aigl}.
\end{proof}

\noindent
In the same way we derive the left eigenfunction for the \textit{cyclical} case.

\begin{corollary}%\label{cor:01}
For a MEMWA chart following \eqref{eq:mewma1} and \eqref{eq:N}, the solution $\psi_\star(u,w)$ of \eqref{eq:psi2bigl} 
is set up similarly to $\psi(u,w)$ by combining $\mathring{\psi}_\star(u)$ from \eqref{eq:psi0bigl} and $d(\gamma)$ in \eqref{eq:law:gamma}:
\begin{equation}
  \psi_\star(u,w) = d(w) \times \mathring{\psi}_\star(u) \,. \label{eq:pairb}
\end{equation}
\end{corollary}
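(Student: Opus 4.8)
The plan is to mimic the proof of Lemma~\ref{LEM:01}, the sole new ingredient being the inhomogeneous term $\Psi_0\,K^\dag(u,w;0,0)$ appearing in \eqref{eq:psi2bigl}. First I would insert the candidate \eqref{eq:pairb}, $\psi_\star(\alpha,\gamma)=d(\gamma)\,\mathring{\psi}_\star(\alpha)$, into the right-hand side of \eqref{eq:psi2bigl} and split the verification into the integral part and the source part. The integral part is handled verbatim as in Lemma~\ref{LEM:01}: factor $\mathring{\psi}_\star(\alpha)$ out of the inner $\gamma$-integral, apply the Appendix identity
\[
 \int_{-1}^{1}(1-\gamma^2)^{\frac{p-3}{2}}K^\dag(u,w;\alpha,\gamma)\,\mathrm{d}\gamma
 =(1-w^2)^{\frac{p-3}{2}}\,\frac{1}{\lambda^2}f_{\chi^2}\!\left(\frac{u}{\lambda^2}\,\Big|\,p,\eta\alpha\right),
\]
and use \eqref{eq:law:gamma} to recognise the emerging constant times $(1-w^2)^{(p-3)/2}$ as $d(w)$. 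This turns the double integral into $d(w)\int_0^h\mathring{\psi}_\star(\alpha)\frac{1}{\lambda^2}f_{\chi^2}(u/\lambda^2\mid p,\eta\alpha)\,\mathrm{d}\alpha$, i.e.\ $d(w)$ times the integral term of \eqref{eq:psi0bigl}.

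Second, I would treat the source term. Using the $\delta=0$ form of $K^\dag(u,w;0,0)$ recorded above Lemma~\ref{LEM:01}, namely $K^\dag(u,w;0,0)=\tfrac{\Gamma(p/2)}{\Gamma((p-1)/2)\sqrt{\pi}}(1-w^2)^{(p-3)/2}\cdot\frac{1}{\lambda^2}f_{\chi^2}(u/\lambda^2\mid p)$, and observing by \eqref{eq:law:gamma} that the prefactor is exactly $d(w)$, one gets $\Psi_0\,K^\dag(u,w;0,0)=d(w)\,\Psi_0\frac{1}{\lambda^2}f_{\chi^2}(u/\lambda^2\mid p)$, which is $d(w)$ times the leading term of \eqref{eq:psi0bigl}. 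Adding the two contributions, the right-hand side of \eqref{eq:psi2bigl} collapses to $d(w)$ times the entire right-hand side of \eqref{eq:psi0bigl}, hence to $d(w)\,\mathring{\psi}_\star(u)=\psi_\star(u,w)$, which is what had to be shown. As a consistency check one may note that $\int_{-1}^{1}d(w)\,\mathrm{d}w=1$ and $\int_0^h\mathring{\psi}_\star(\alpha)\,\mathrm{d}\alpha=1-\Psi_0$, so $\psi_\star$ carries the correct total mass $1-\Psi_0$, consistent with the stationary ($\varrho=1$) setting of the cyclical equation.

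There is essentially no hard step here once the Appendix integral identity is granted; the only point genuinely worth verifying is that the normalisation constant in $K^\dag(u,w;0,0)$ coincides with the angular density $d(w)$ of \eqref{eq:law:gamma}. This holds by construction — that constant is the reciprocal beta function $1/B(1/2,(p-1)/2)$ normalising the uniform-on-$S^{p-1}$ angle law — so the real content of the Corollary is simply that the source term of the cyclical equation factorises along the angular variable in exactly the same way the kernel $K^\dag$ does, which is precisely why the product form $\psi_\star=d\cdot\mathring{\psi}_\star$ propagates through \eqref{eq:psi2bigl}.
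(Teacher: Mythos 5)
Your proposal is correct and follows essentially the same route as the paper: insert the product ansatz into \eqref{eq:psi2bigl}, reduce the double integral via the Appendix identity exactly as in Lemma~\ref{LEM:01}, and observe that the prefactor of $K^\dag(u,w;0,0)$ is precisely $d(w)$, so the right-hand side collapses to $d(w)$ times the right-hand side of \eqref{eq:psi0bigl}. The added mass-normalisation check is a nice touch but not part of the paper's argument.
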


\begin{proof}
Inserting \eqref{eq:pairb} into \eqref{eq:psi2bigl} on the right-hand side.
\begin{align*}
	\lefteqn{\psi_\star(u,w)} \\	
%    & = \Psi_0 K^\dag(u,w;0,0)
%	  + \int_0^h \int_{-1}^{1} \psi_\star(\alpha,\gamma) K^\dag(u,w;\alpha,\gamma) \,\mathrm{d}\gamma\,\mathrm{d}\alpha \\
	& = \Psi_0 \, d(w) \, \frac{1}{\lambda^2} f_{\chi^2}\left(\frac{u}{\lambda^2} \,\Big|\, p \right)	  \\
	& \qquad    
	  + \frac{\Gamma(\frac{p}{2})}{\Gamma(\frac{p-1}{2})\sqrt{\pi}} \int_0^h  \mathring{\psi}_\star(\alpha)
   	  \int_{-1}^{1} (1-\gamma^2)^{\frac{p-3}{2}} K^\dag(u,w;\alpha,\gamma) \,\mathrm{d}\gamma\,\mathrm{d}\alpha \\
   	& = \Psi_0 \, d(w) \, \frac{1}{\lambda^2} f_{\chi^2}\left(\frac{u}{\lambda^2} \,\Big|\, p \right)
   	  + d(w) \int_0^h \mathring{\psi}_\star(\alpha) \frac{1}{\lambda^2} f_{\chi^2}\left(\frac{u}{\lambda^2} \,\Big|\, p, \eta \alpha\right) \, \mathrm{d}\alpha \\
    & = d(w) \times \mathring{\psi}_\star(u) \,.
\end{align*}
Again we use the \textit{in-control} integral equation, now \eqref{eq:psi0bigl} for the \textit{cyclical} case.
\end{proof}

Having the two left eigenfunctions in the shape we need for calculating the \textit{out-of-control steady-state} ARL, we
combine them with the ARL function which is determined through \eqref{eq:biLnew}:
\begin{align*}
  \mathcal{D} & = \int_0^h \mathring{\psi}(\alpha) \int_{-1}^1 d(\gamma) \mathcal{L}(\alpha,\gamma) \,\mathrm{d}\gamma\,\mathrm{d}\alpha \,, \\
  \mathcal{D}_\star & = \int_0^h \mathring{\psi}_\star(\alpha) \int_{-1}^1 d(\gamma) \mathcal{L}(\alpha, \gamma) \,\mathrm{d}\gamma\,\mathrm{d}\alpha
  + \Psi_0 \, \mathcal{L}(0,0) \,.
\end{align*}
Utilizing the node structure of the numerical solution of \eqref{eq:biLnew} in \cite{Knot:2017a}, we replace the above two double integrals by
quadrature and calculate, eventually, the two \textit{steady-state} ARL versions by evaluating the resulting double sum.
In the next section, the new numerical algorithms are used to produce maps of the left eigenfunction $\psi()$
(again, $\psi_\star()$ looks similarly)
and to calculate $\mathcal{D}$ and $\mathcal{D}_\star$ for comparison with, e.\,g., \cite{Prab:Rung:1997} and Monte Carlo results.
In addition, the \textit{worst-case} ARL is evaluated.

\section{Comparison studies}\label{sec:comp}

By using the algorithms presented in the previous sections, we want to illustrate the specific
shapes of the two functions constituting the integrand of the $\mathcal{D}$ integral.
To do this, we plot, in Figures~\ref{fig:map1} and \ref{fig:map2}, isolines of
$\mathcal{L}(\alpha, \theta)$ and the left eigenfunction $\psi(\alpha, \theta)$
where $\alpha = \bm{z}_0^\prime \bm{z}_0$ denotes the distance from $(0,0)$,
and $\theta = -\text{acos}(\gamma)$ is both the angle between the abscissa and the drawn vector
(see Figure~\ref{fig:map1}(b) and \ref{fig:map2}(f) for an example vector), and between $\bm{z}_0$ and $\bm{\mu}_1$.
In order to judge the usefulness of the \textit{zero-state} ARL, we added the respective isoline level.
\begin{figure}[hbtp]
\subfloat[$p=2$, $\mathcal{L}()$]{\includegraphics[width=.51\textwidth]{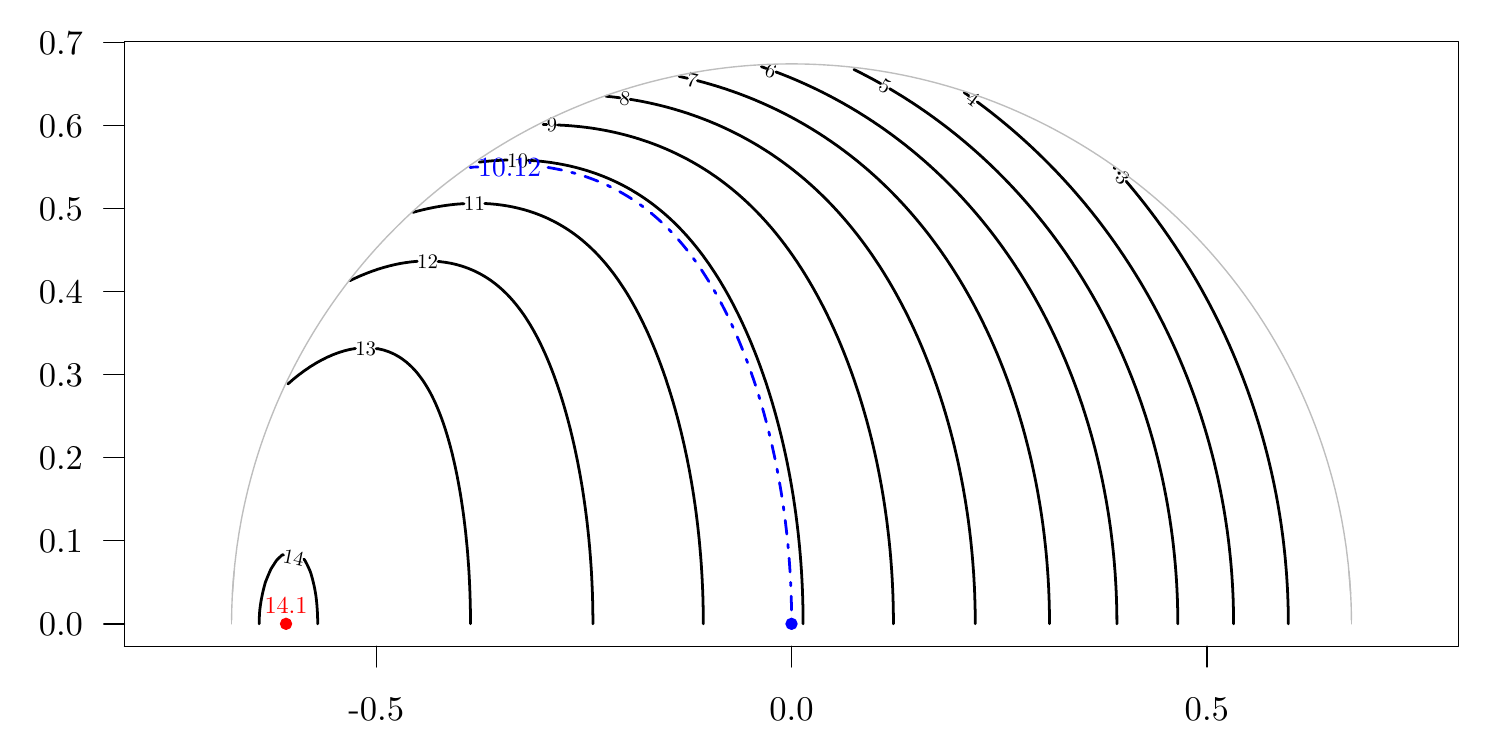}}
\subfloat[$p=2$, $\psi()$]{\includegraphics[width=.51\textwidth]{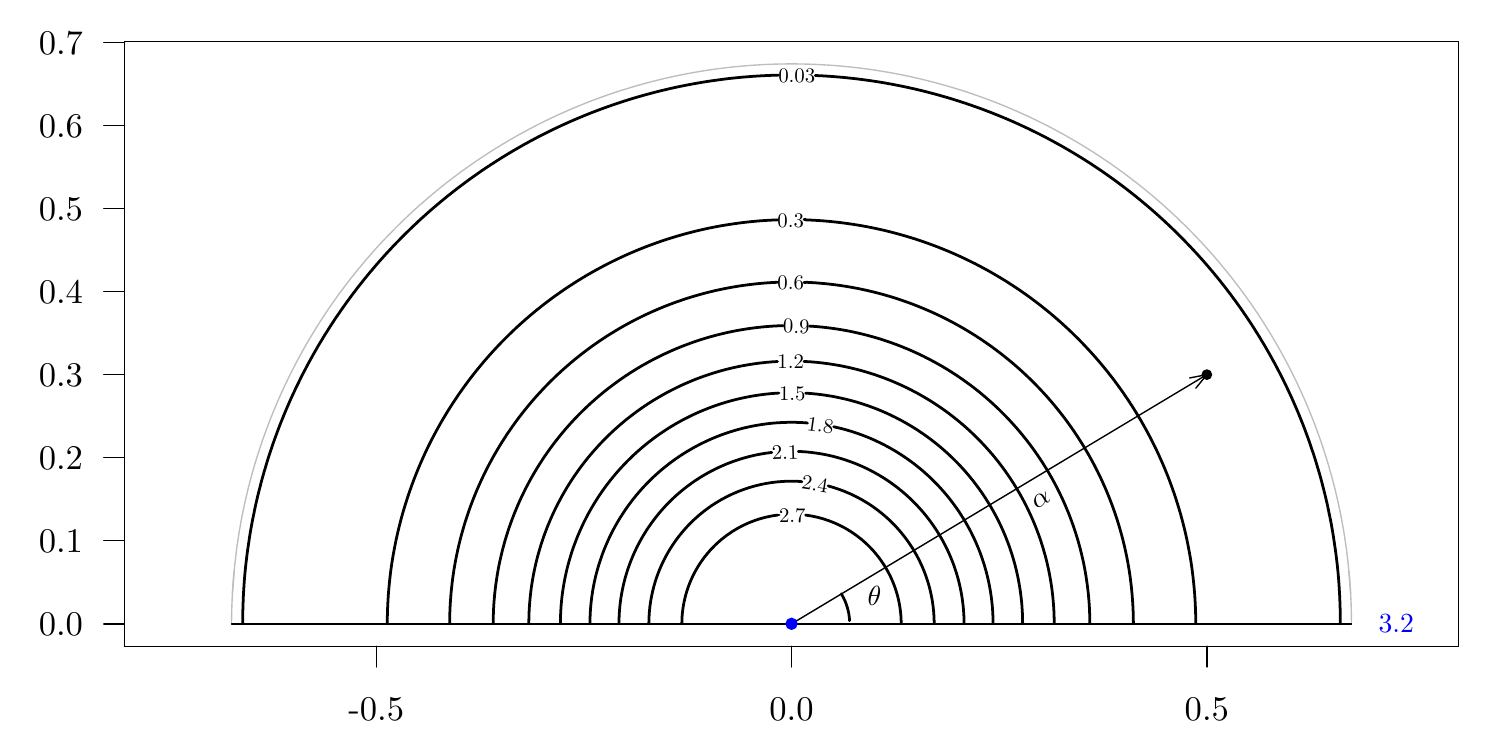}}
	
\subfloat[$p=3$, $\mathcal{L}()$]{\includegraphics[width=.51\textwidth]{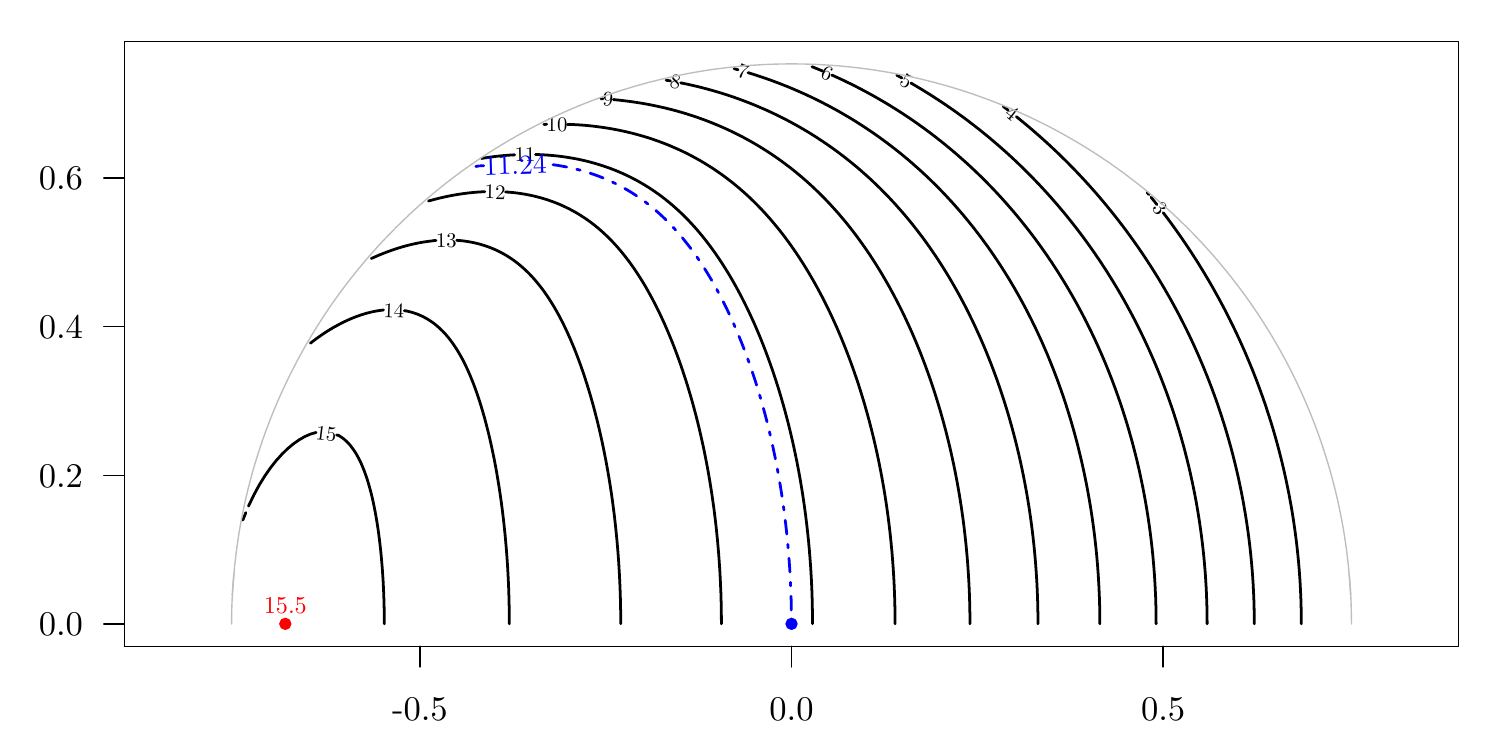}}
\subfloat[$p=3$, $\psi()$]{\includegraphics[width=.51\textwidth]{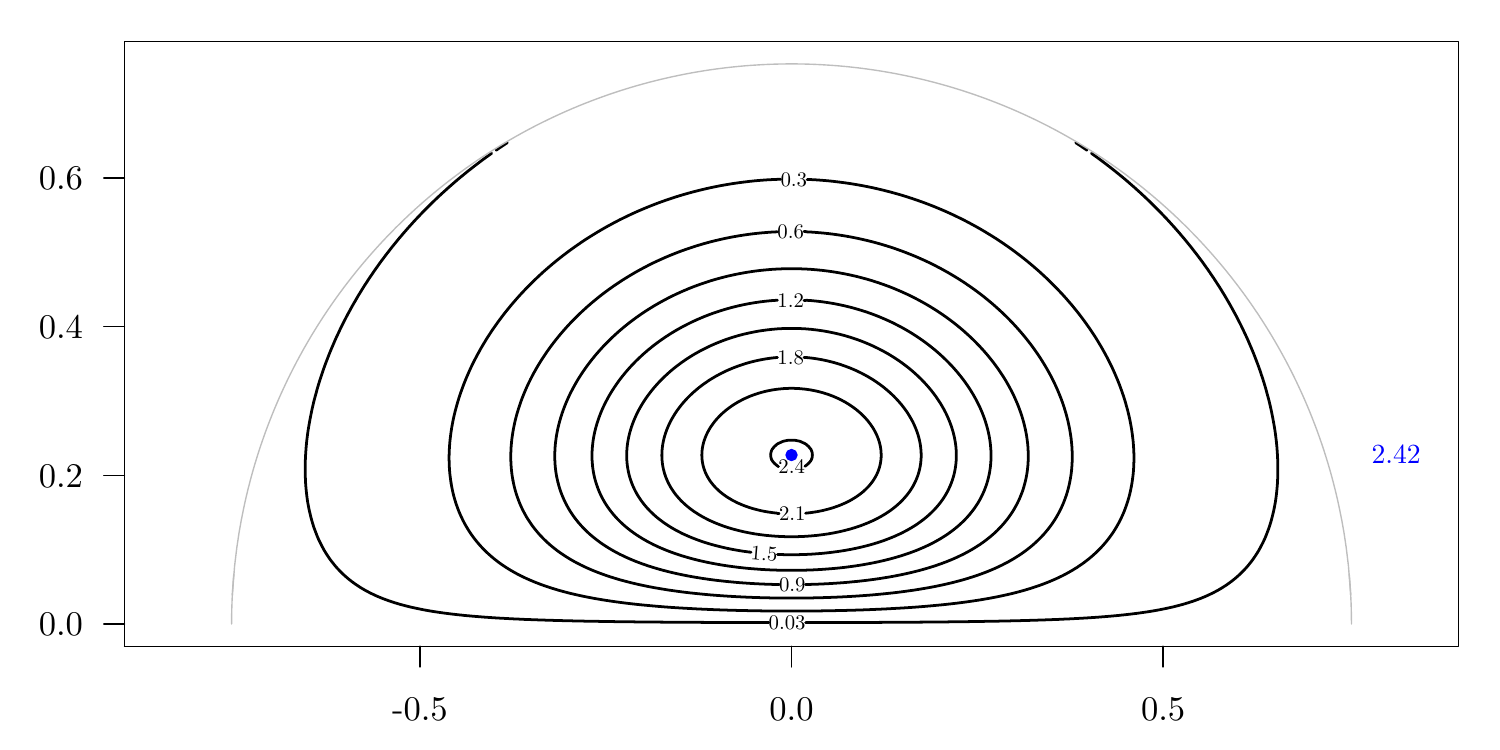}}
	
\subfloat[$p=4$, $\mathcal{L}()$]{\includegraphics[width=.51\textwidth]{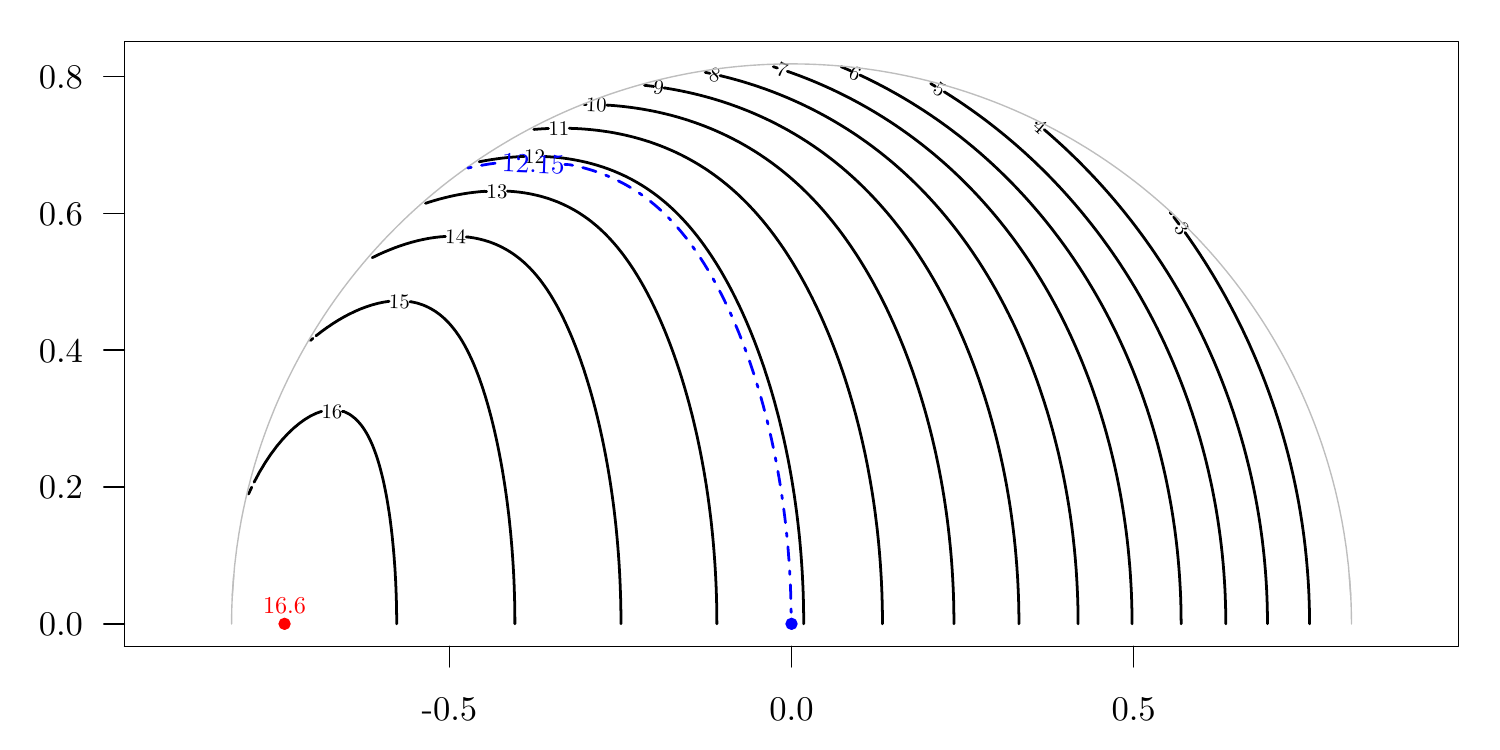}}
\subfloat[$p=4$, $\psi()$]{\includegraphics[width=.51\textwidth]{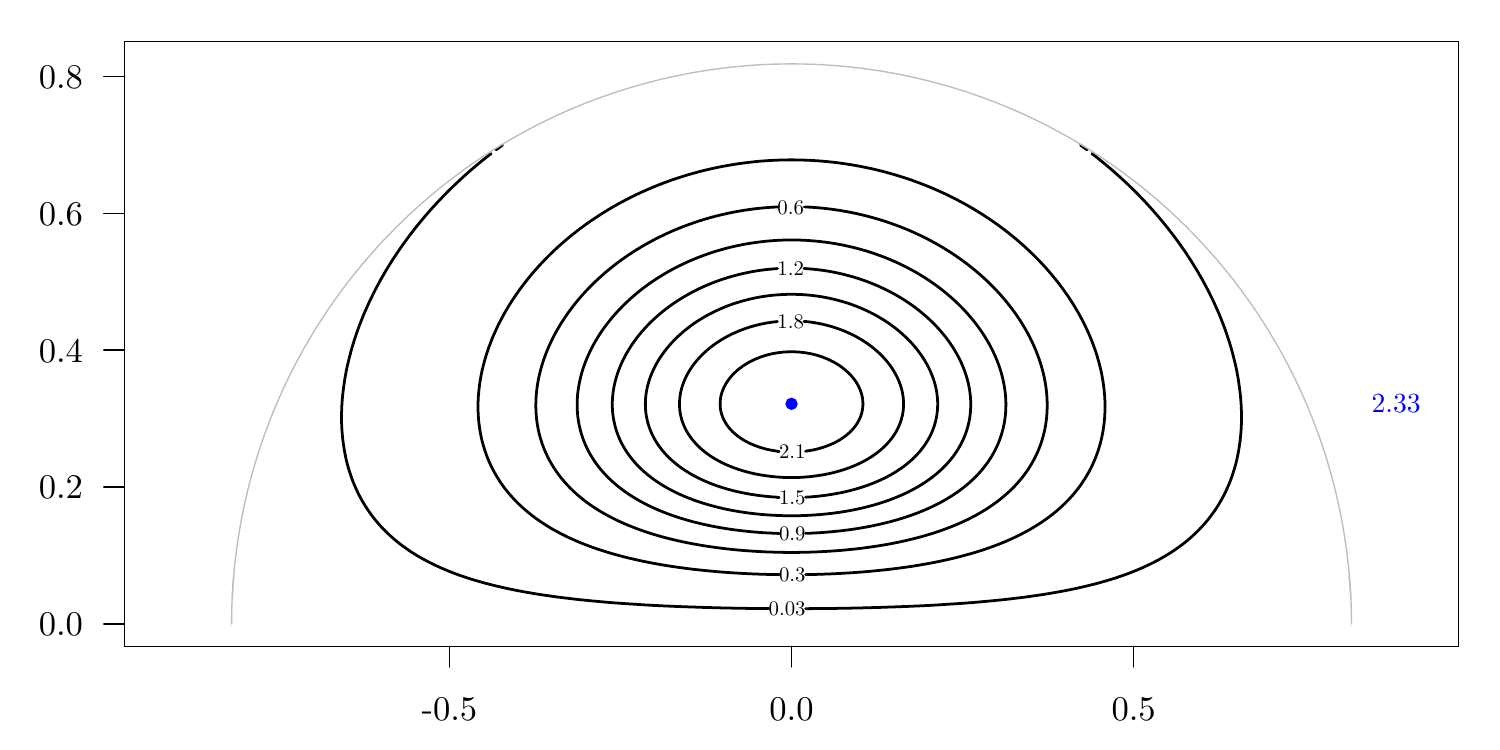}}
	
\caption{Isolines  of the ARL function $\mathcal{L}(\alpha, \theta)$
and the left eigenfunction $\psi(\alpha, \theta)$; polar type plot with $\alpha$ being
the distance of $\bm{Z}$ to zero, and $\theta$ exhibits the angle (latitude) between $\bm{Z}$ and the new mean $\bm{\mu}_1$ with $\Vert \bm{\mu}_1 \Vert = 1$;
$p\in\{2,3,4\}$.}\label{fig:map1}
\end{figure}
\begin{figure}[hbtp]
\subfloat[$p=5$, $\mathcal{L}()$]{\includegraphics[width=.51\textwidth]{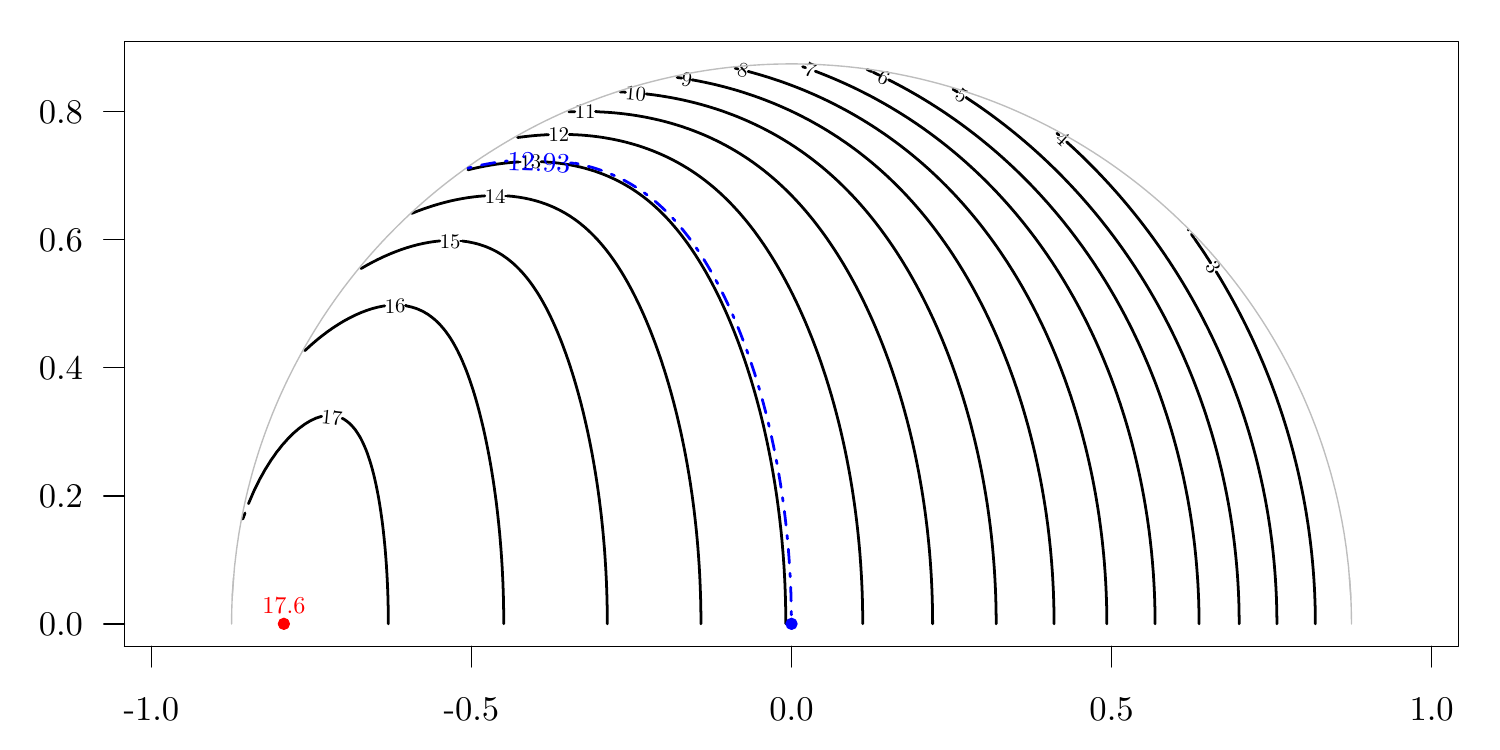}}
\subfloat[$p=5$, $\psi()$]{\includegraphics[width=.51\textwidth]{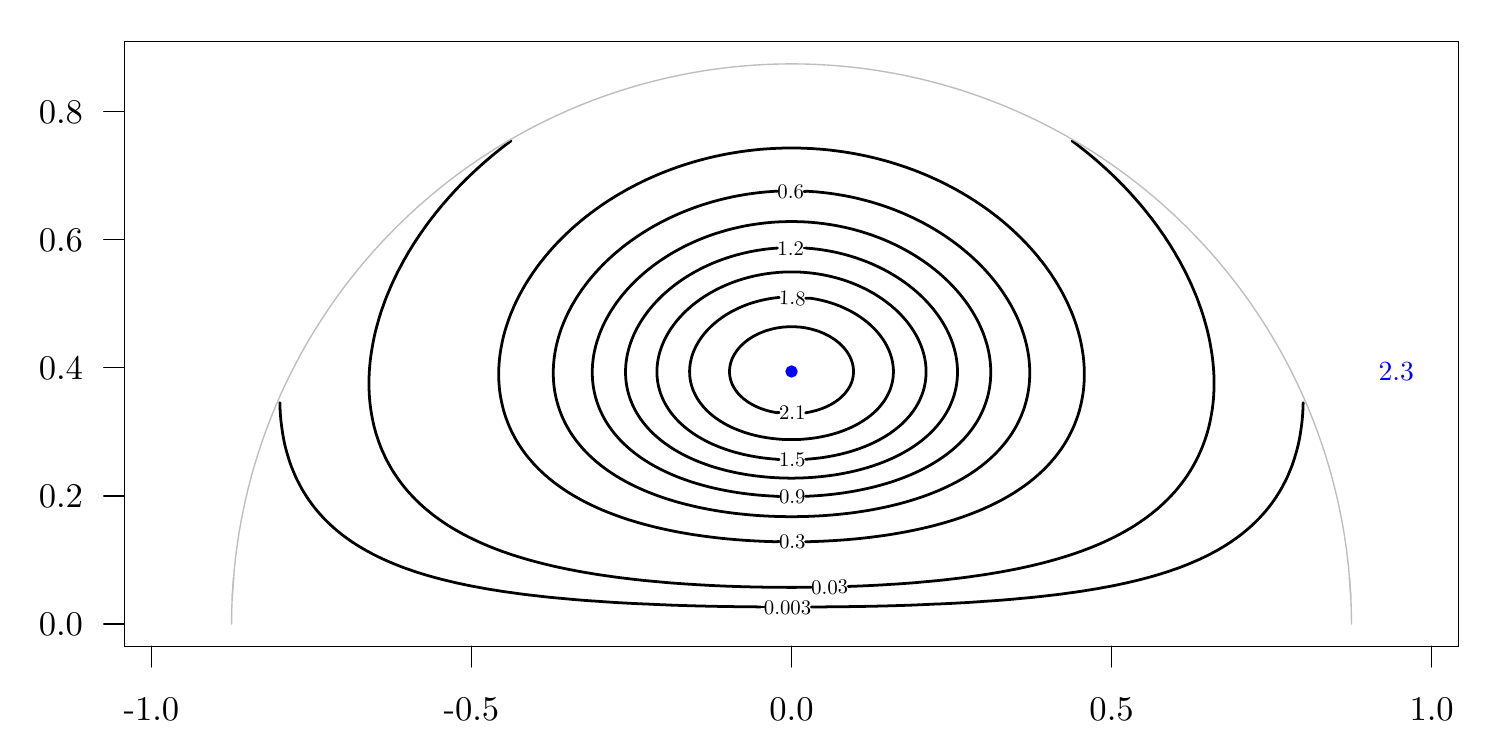}}
	
\subfloat[$p=10$, $\mathcal{L}()$]{\includegraphics[width=.51\textwidth]{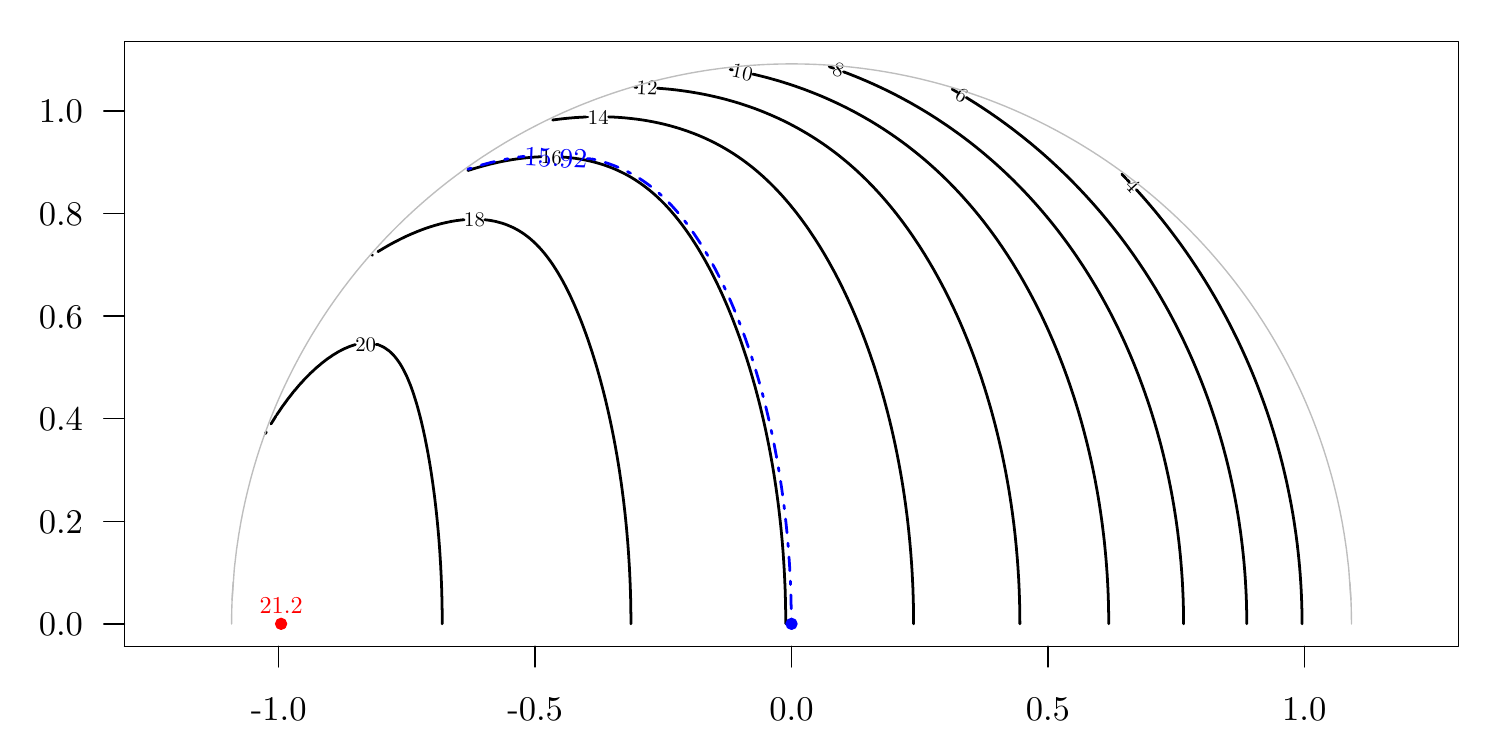}}
\subfloat[$p=10$, $\psi()$]{\includegraphics[width=.51\textwidth]{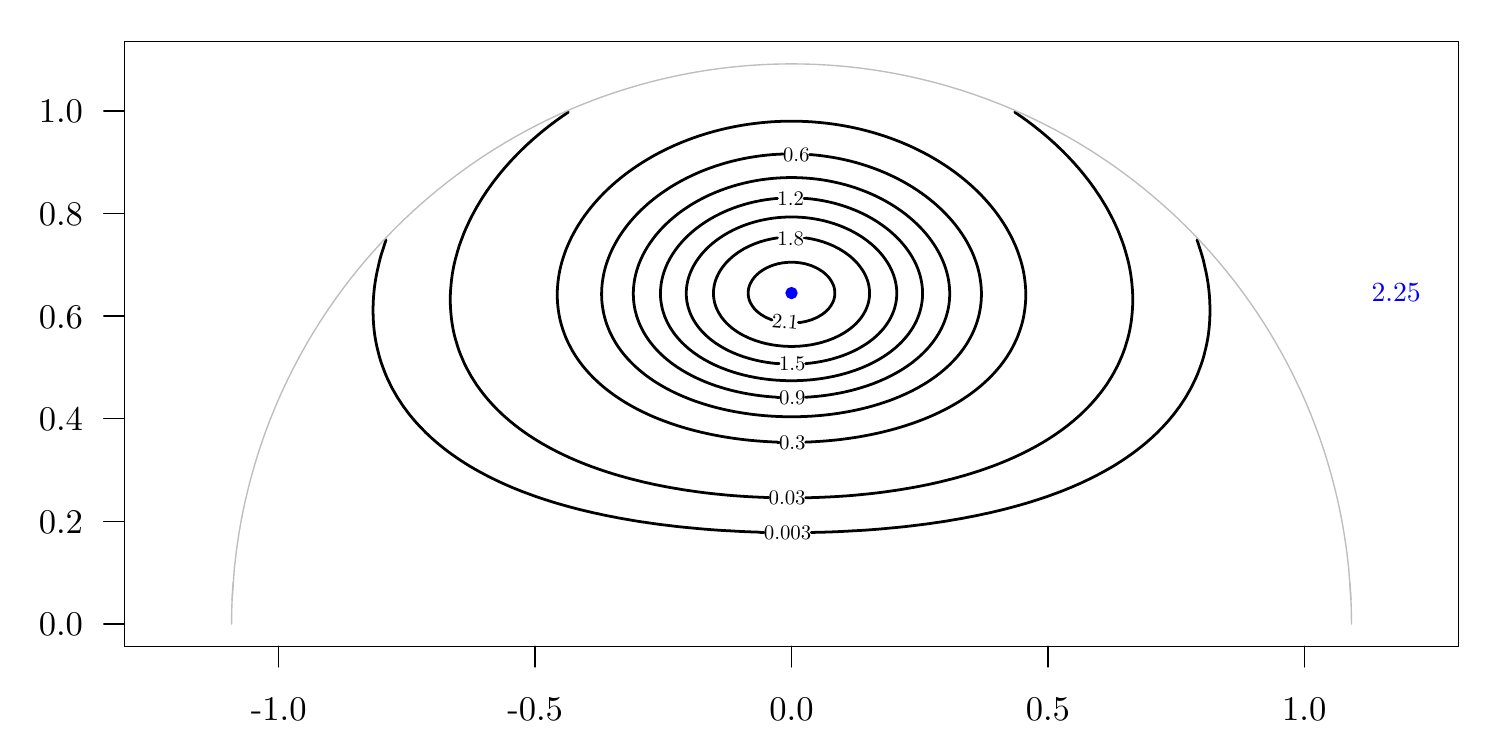}}
	
\subfloat[$p=20$, $\mathcal{L}()$]{\includegraphics[width=.51\textwidth]{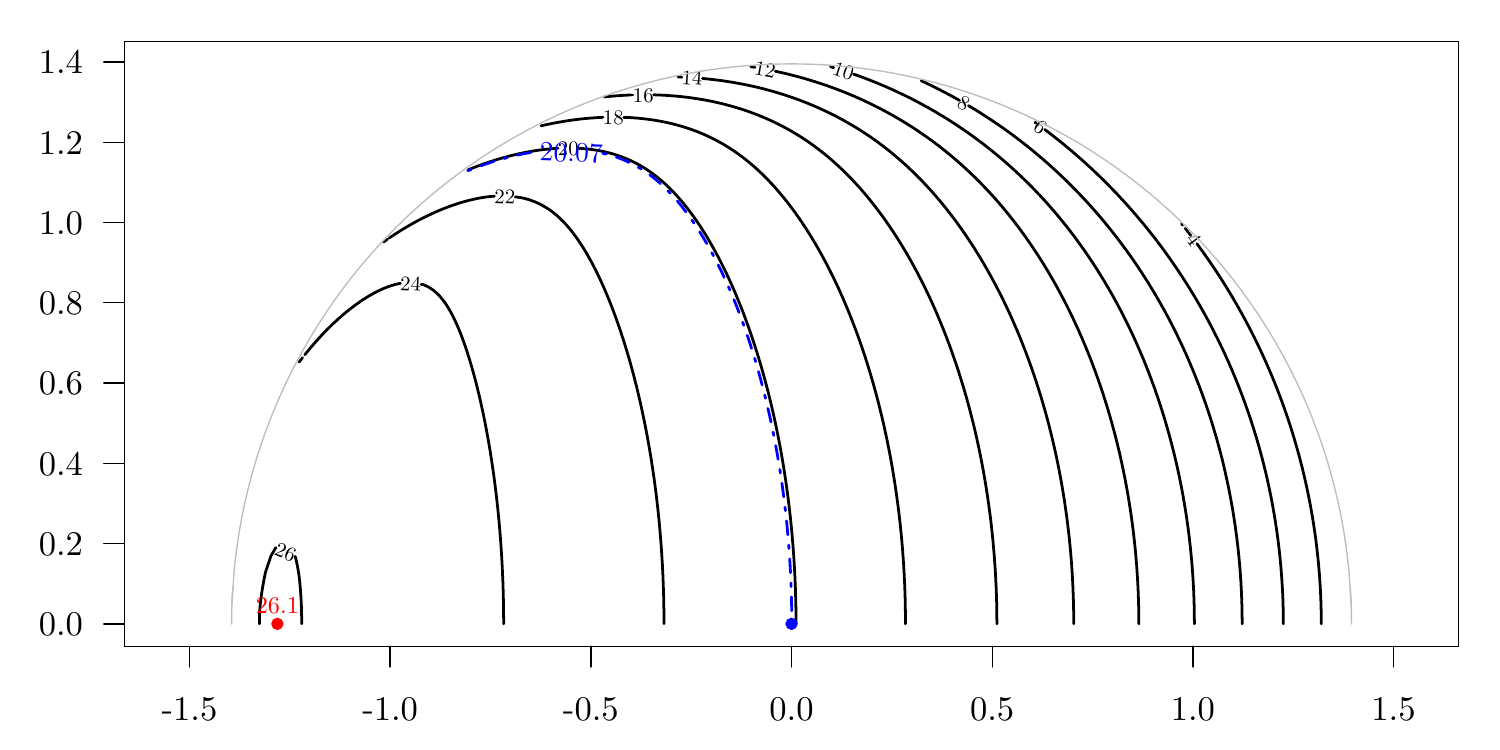}}
\subfloat[$p=20$, $\psi()$]{\includegraphics[width=.51\textwidth]{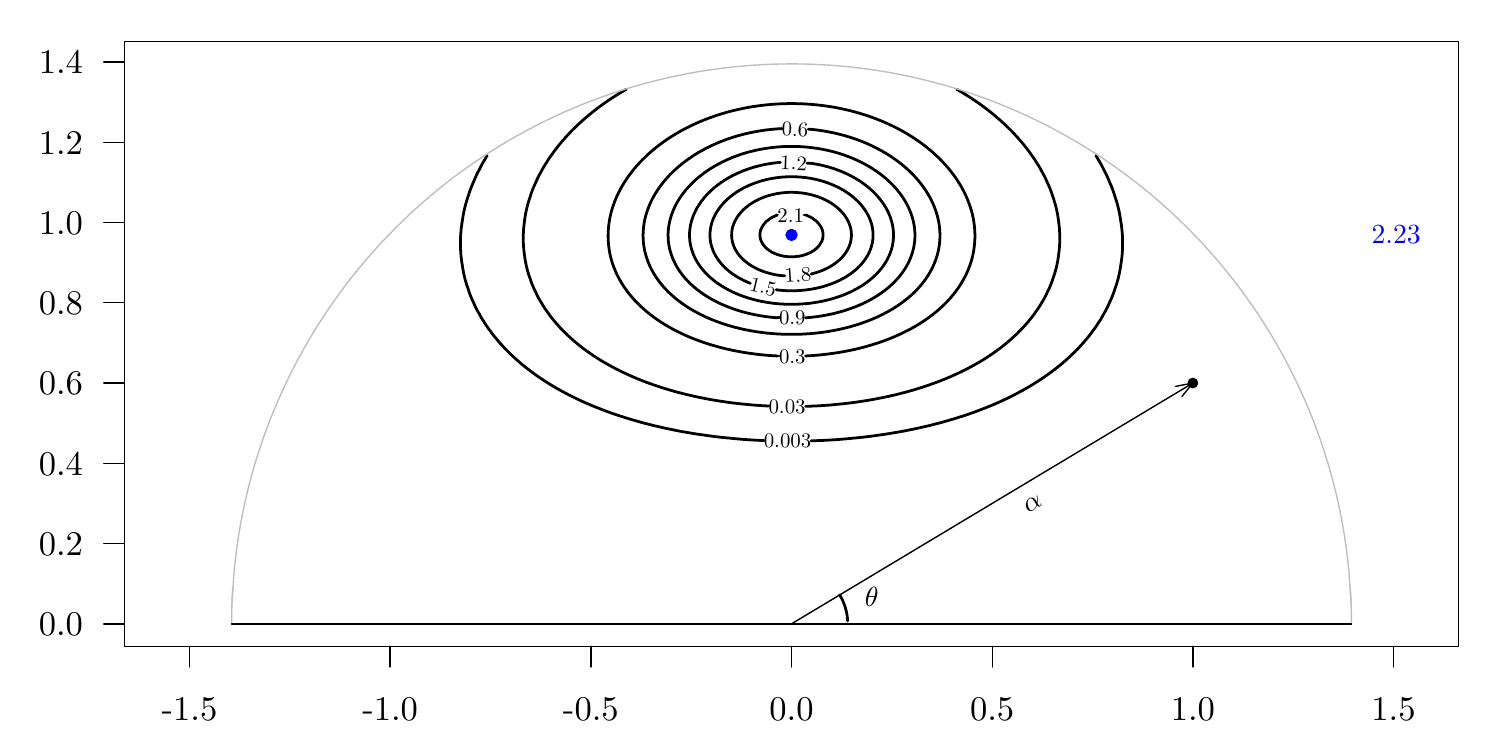}}
	
\caption{Isolines  of the ARL function $\mathcal{L}(\alpha, \theta)$ and the left eigenfunction $\psi(\alpha, \theta)$; polar type plot with $\alpha$ being
the distance of $\bm{Z}$ to zero, and $\theta$ exhibits the angle (latitude) between $\bm{Z}$ and the new mean $\bm{\mu}_1$ with $\Vert \bm{\mu}_1 \Vert = 1$;
$p\in\{5,10,20\}$.}\label{fig:map2}
\end{figure}
The six ARL plots look similarly.
Even the single point labeling the maximum \textit{out-of-control} ARL yields a nearly constant (relative to the threshold) position.
The maps of $\psi()$ start with circular isolines for $p = 2$
(remember the ``exponential'' shape of $\mathring{\psi}()$ and $d(\theta) = 1/\pi$). Then the probability mass wanders along
the $\theta = \pi/2$ line towards the  border. Overlaying the two maps we conclude that approaching the \textit{worst-case}
becomes less likely for increasing dimension.
In consequence, typical detection delays will be close to the \textit{steady-state} ARL, $\mathcal{D}$, which itself differs
not much from the \textit{zero-state} ARL.

Now, we want to compare results of all three ARL types for various values of $\delta = \bm{\mu}^\prime \bm{\mu}$ (and not only one
particular change $\bm{\mu}_1$).
In order to do so, we take from \cite{Prab:Rung:1997} several configurations for $\lambda = 0.1$ and $E_\infty(N) = 200$.
Recall that these authors utilized a bivariate Markov chain to approximate the stationary distribution
and the ARL function $\mathcal{L}()$. The corresponding matrix exhibits dimensions from 1\,500 ($m=30$) up to 6\,000 ($m=60$).
\begin{table}[hbtp]
\centering
\caption{ARL for $\lambda = 0.1$ and $p \in \{2, 3, 4, 10\}$; PR1997 refers to \cite{Prab:Rung:1997},
K2017 to \cite{Knot:2017a}, and MC to Monte Carlo ($10^9$ replications); \faUser\ --- new approach.}\label{tab:02}
\small%\renewcommand{\arraystretch}{0.85}
\begin{tabular}{c|*{3}{r}|*{5}{r}} \hline
  & \multicolumn{3}{c|}{\textit{zero-state}} & \multicolumn{5}{c}{\textit{steady-state}} \\
  $\sqrt{\delta}$ & PR1997 & K2017 & MC & PR1997 &
  $\mathcal{D}$, \faUser & $\mathcal{D}$, MC & $\mathcal{D}_\star$, \faUser & $\mathcal{D}_\star$, MC \\ \hline 
  \multicolumn{9}{c}{$p=2$, $h_4 = 8.64$} \\ \hline
  0   & 199.98 & 200.54 & 200.54   & 200.03 & 193.09 & 193.09 & 193.29 & 193.29 \\
  0.5 &  28.07 &  28.02 &  28.02   &  26.87 &  26.79 &  26.79 &  26.82 &  26.82 \\
  1   &  10.15 &  10.13 &  10.13   &   9.71 &   9.68 &   9.68 &   9.69 &   9.69 \\
  1.5 &   6.11 &   6.09 &   6.09   &   5.85 &   5.83 &   5.83 &   5.84 &   5.84 \\
  2   &   4.42 &   4.41 &   4.41   &   4.23 &   4.22 &   4.22 &   4.23 &   4.23 \\
  3   &   2.93 &   2.92 &   2.92   &   2.81 &   2.81 &   2.80 &   2.81 &   2.81 \\ \hline
  \multicolumn{9}{c}{$p=3$, $h_4 = 10.784$} \\ \hline
  0   &     -- & 200.03 & 200.03   &     -- & 191.86 & 191.86 & 192.09 & 192.09 \\
  0.5 &     -- &  31.85 &  31.85   &     -- &  30.22 &  30.22 &  30.26 &  30.26 \\
  1   &     -- &  11.24 &  11.24   &     -- &  10.60 &  10.60 &  10.62 &  10.62 \\
  1.5 &     -- &   6.71 &   6.71   &     -- &   6.31 &   6.31 &   6.32 &   6.32 \\
  2   &     -- &   4.83 &   4.83   &     -- &   4.54 &   4.54 &   4.54 &   4.54 \\
  3   &     -- &   3.19 &   3.19   &     -- &   2.99 &   2.99 &   3.00 &   3.00 \\ \hline
  \multicolumn{9}{c}{$p=4$, $h_4 = 12.73$} \\ \hline
  0   & 200.12 & 200.50 & 200.50   & 200.05 & 191.82 & 191.82 & 192.07 & 192.07 \\
  0.5 &  35.11 &  35.07 &  35.07   &  33.12 &  33.11 &  33.11 &  33.16 &  33.16 \\
  1   &  12.17 &  12.15 &  12.15   &  11.38 &  11.36 &  11.36 &  11.38 &  11.38 \\
  1.5 &   7.22 &   7.20 &   7.20   &   6.70 &   6.69 &   6.69 &   6.70 &   6.70 \\
  2   &   5.19 &   5.18 &   5.18   &   4.80 &   4.79 &   4.79 &   4.80 &   4.80 \\
  3   &   3.41 &   3.41 &   3.41   &   3.14 &   3.14 &   3.14 &   3.14 &   3.14 \\ \hline
  \multicolumn{9}{c}{$p=10$, $h_4 = 22.67$} \\ \hline
  0   & 199.95 & 200.77 & 200.76   & 200.06 & 190.38 & 190.38 & 190.72 & 190.72 \\
  0.5 &  48.52 &  48.54 &  48.54   &  44.19 &  45.17 &  45.17 &  45.27 &  45.27 \\
  1   &  15.98 &  15.93 &  15.93   &  14.32 &  14.47 &  14.47 &  14.51 &  14.51 \\
  1.5 &   9.23 &   9.21 &   9.21   &   8.23 &   8.21 &   8.21 &   8.24 &   8.24 \\
  2   &   6.57 &   6.56 &   6.56   &   5.83 &   5.77 &   5.77 &   5.79 &   5.79 \\
  3   &   4.28 &   4.28 &   4.28   &   3.79 &   3.70 &   3.70 &   3.71 &   3.71 \\ \hline
\end{tabular}
\end{table}
We added the \textit{zero-state} results to allow a comparison of both the potentially different accuracies between
\textit{zero-state} and \textit{steady-state} ARL and, of course, of the levels itselves for the considered $\delta$.
First, we conclude that \cite{Prab:Rung:1997} really determined the \textit{cyclical steady-state} ARL, $\mathcal{D}_\star$.
Second, we recognize similar accuracy differences between the Markov chain approach of \cite{Prab:Rung:1997} and
the methods deploying Nystr\"om with Gau\ss{}-Legendre quadrature for either ARL type.
The Monte Carlo confirmation runs with $10^9$ replications confirm the validity of the latter procedures.
Note that all Nystr\"om results are based on $r=30$ nodes resulting in linear equations systems of dimension 30 and 900, respectively.
Hence, the new method provides higher accuracy with smaller matrix dimensions which means less computing time.
Eventually, the differences between \textit{conditional} and \textit{cyclical steady-state} ARL
are little so that both could be applied for judging the long time behavior of
MEWMA control charts.

Next, we want to illustrate the dependence of the ARL to the shift magnitude $\delta$ utilizing all
three ARL types. Looking at the ARL maps we conjecture that the \textit{worst-case} ARL is realized for $\theta = \pi$ ($\gamma = -1$) and some
$\alpha$ close to the normalized threshold $h$. In the sequel we apply therefore golden section search to identify the final $\alpha$ yielding the
maximum $\mathcal{L}(\alpha, -1)$ from \eqref{eq:biLnew}.
As in Table~\ref{tab:02} we plot the ARL against $\sqrt{\delta} = \Vert \bm{\mu} \Vert \in (0,3.5)$.
\begin{figure}[hbtp]
\subfloat[$p=2$]{\includegraphics[width=.5\textwidth]{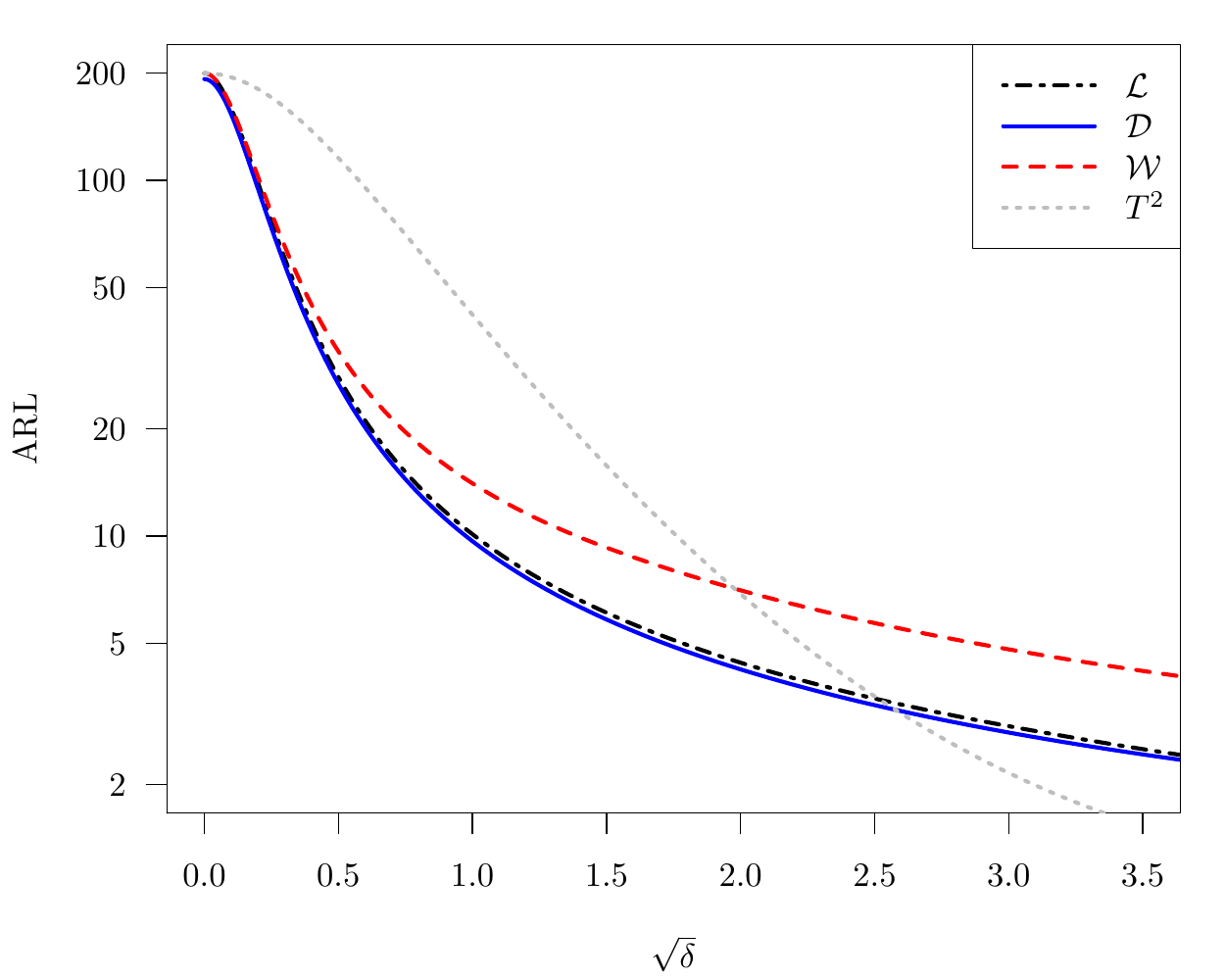}}
\subfloat[$p=3$]{\includegraphics[width=.5\textwidth]{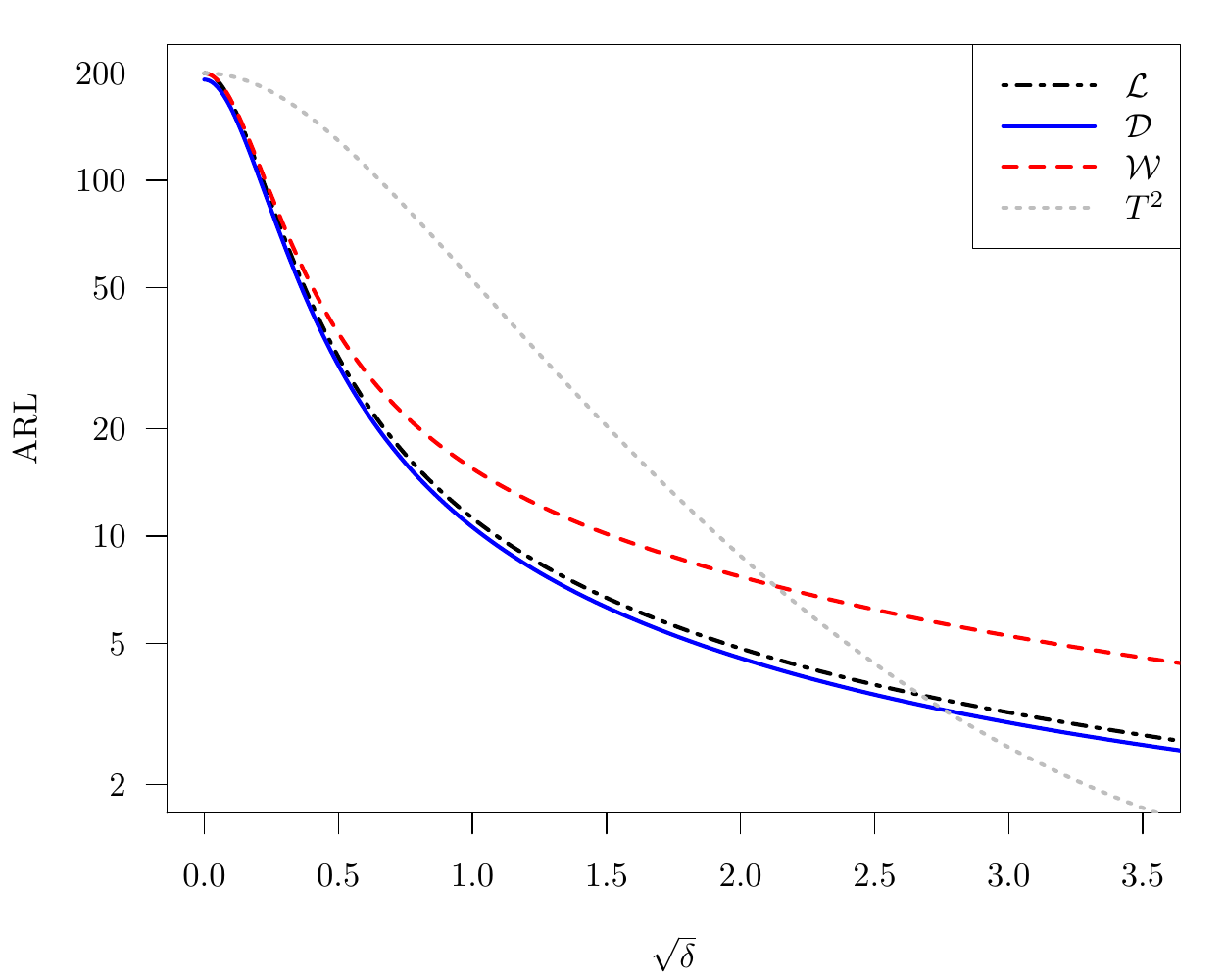}}
	
\subfloat[$p=4$]{\includegraphics[width=.5\textwidth]{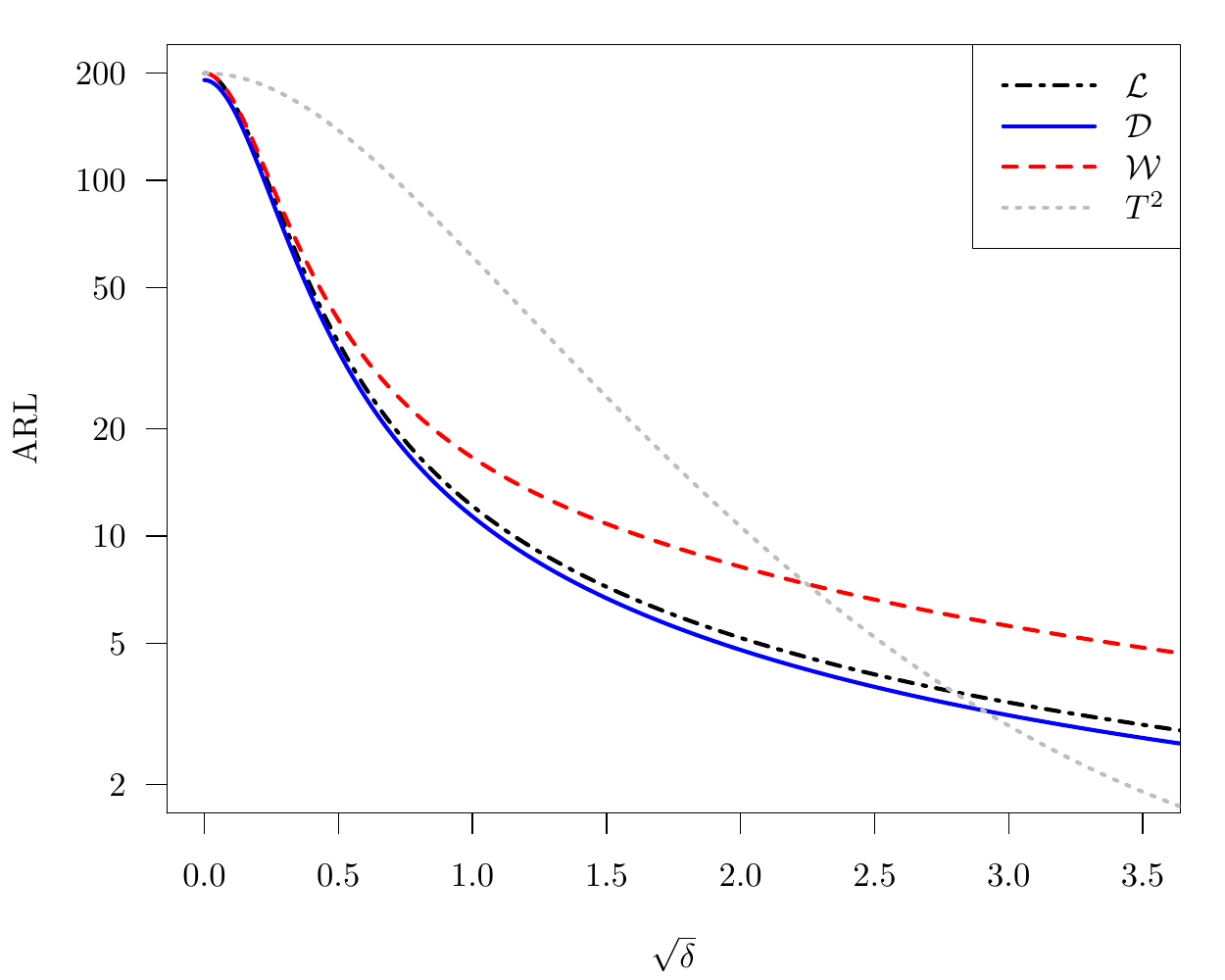}}
\subfloat[$p=10$]{\includegraphics[width=.5\textwidth]{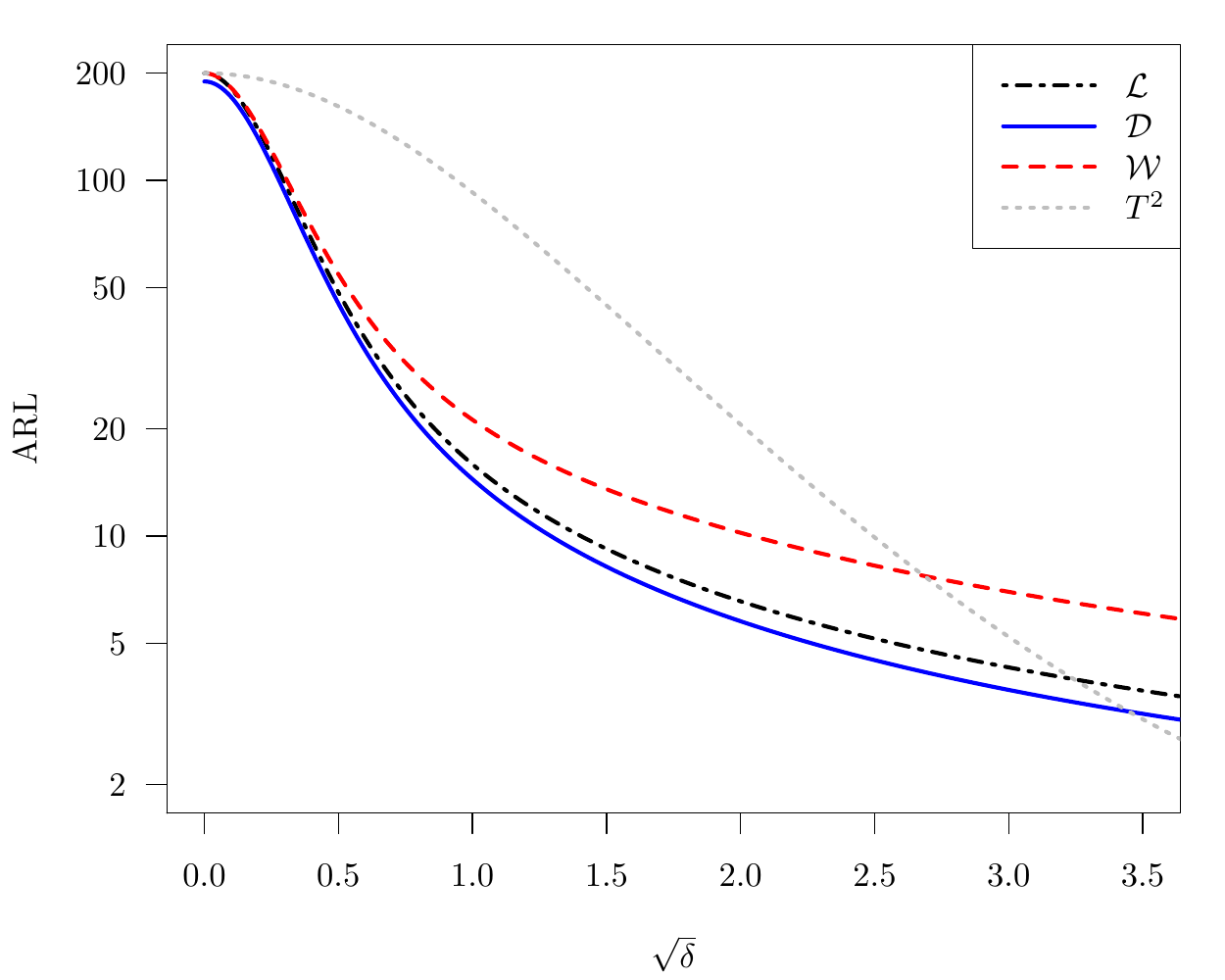}}
	
\caption{Diverse ARL (\textit{zero-state} $\mathcal{L}$, \textit{steady-state} $\mathcal{D}$, \textit{worst-case} $\mathcal{W}$) of MEWMA with $\lambda=0.1$,
and of Hotelling's ($T^2$) Shewhart-type chart vs. $\sqrt{\delta} = \sqrt{\Vert \bm{\mu} \Vert}$; $E_\infty(N) = 200$.}\label{fig:arlVSdelta1}
\end{figure}
In Figure~\ref{fig:arlVSdelta1}, we provide all three ARL types of MEWMA charts with $\lambda = 0.1$ and \textit{in-control} ARL 200.
In addition, we plot the respective single curves of the Shewhart-type Hotelling chart which corresponds to $\lambda = 1$ and
deploys only the most recent observation to decide whether signaling or not.

MEWMA clearly dominates the classic Hotelling chart for change magnitudes $\sqrt{\delta} < 2$.
In case of dimension $p=10$ it remains valid even for $\sqrt{\delta} < 2.5$.
Increasing the dimension beyond 10, we would see the MEWMA dominance over the whole interval $(0,3.5)$ which is not really surprising
because an increase in $p$ while holding $\delta$ means that the shift in relation to the vector length gets smaller.
Then control charts with memory such as MEWMA gain more and more in the competition with Shewhart charts.
Taking into account that the worst case is not very likely, we could
claim that MEWMA performs better for $\sqrt{\delta} < 2.5$ and $\sqrt{\delta} < 3$, respectively.

Comparing the \textit{zero-state} and the \textit{steady-state} ARL, we observe their divergence for increasing dimension $p$.
This is essentially driven by the subtle behavior of the \textit{steady-state} density of the MEWMA statistic, see Figure~\ref{fig:map2},
which counterbalances the increased difficulty of detecting a change of magnitude $\delta$ for increased $p$ by moving the probability mass
to favorable regions. The distance between the \textit{zero-state} and the \textit{worst-case} ARL
remains stable. Note that the ARL values are plotted on a log-scale, hence we observe constant ratios between $\mathcal{W}$ and $\mathcal{L}$.

In order to comprehend the influence of $\lambda$ to the ARL performance, we study the
relationship between $\lambda$ and the respective ARL type for one specific change, $\delta = 1$.
Ideally, we could derive some design rules as in \cite{Prab:Rung:1997}, Table 2, where the authors propose for $\delta = 1$ and $E_\infty(N)=500$
the values 0.105 and 0.085 for dimensions $p=4$ and $10$, respectively, aiming at minimal $\mathcal{L}$.
The slightly more precise numbers deploying the Gau\ss{}-Legendre Nystr\"om methods would be 0.104 and 0.086, respectively.
In Figure~\ref{fig:lambdaOpt} we illustrate the hunt for optimal $\lambda$ while minimizing all three ARL types separately.
Here we assume again $E_\infty(N)=200$ pointing out that the optimal $\lambda$ would be smaller for larger $E_\infty(N)$ as already indicated in \cite{Prab:Rung:1997}.
\begin{figure}[hbtp]
\subfloat[$p=2$]{\includegraphics[width=.5\textwidth]{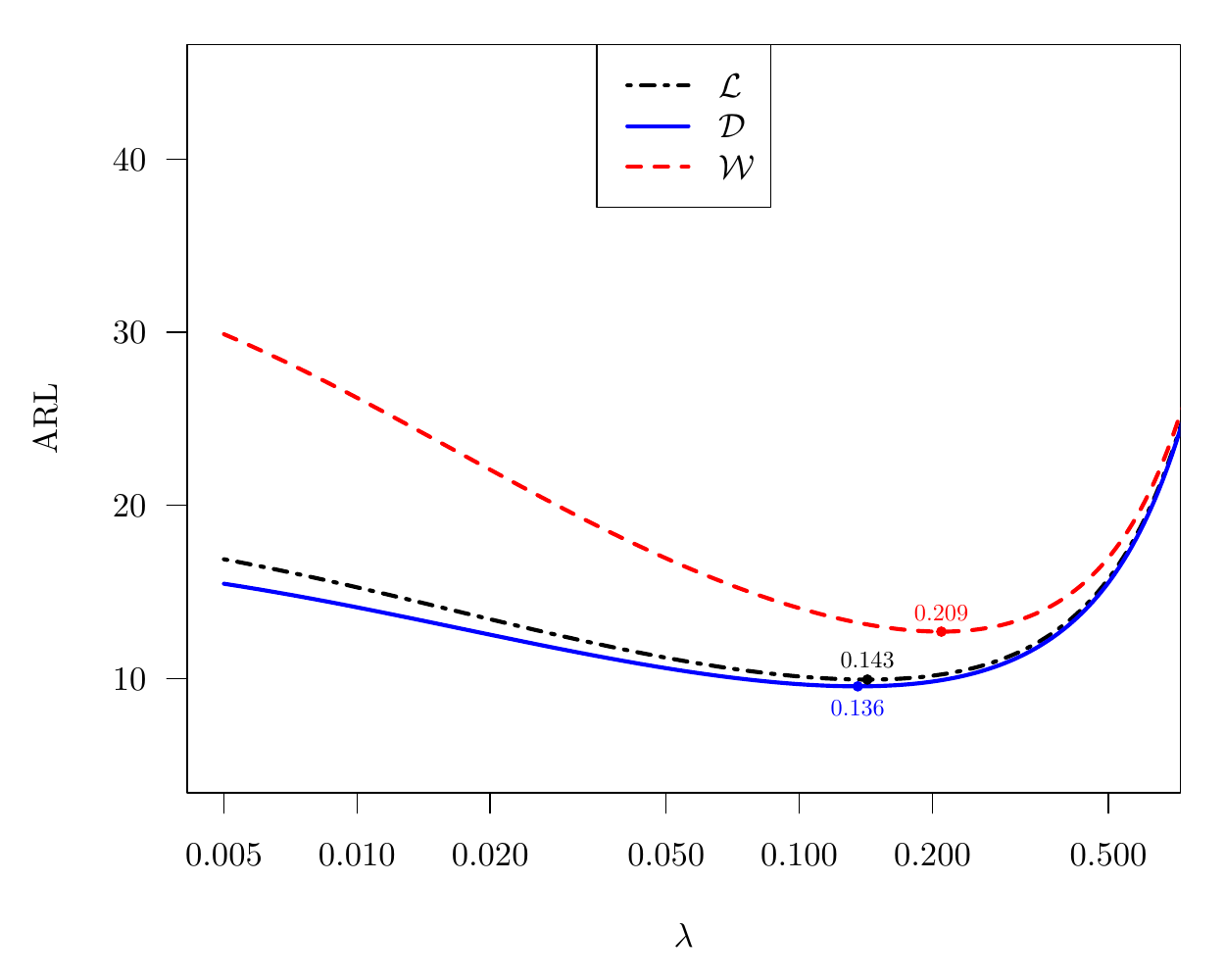}}
\subfloat[$p=3$]{\includegraphics[width=.5\textwidth]{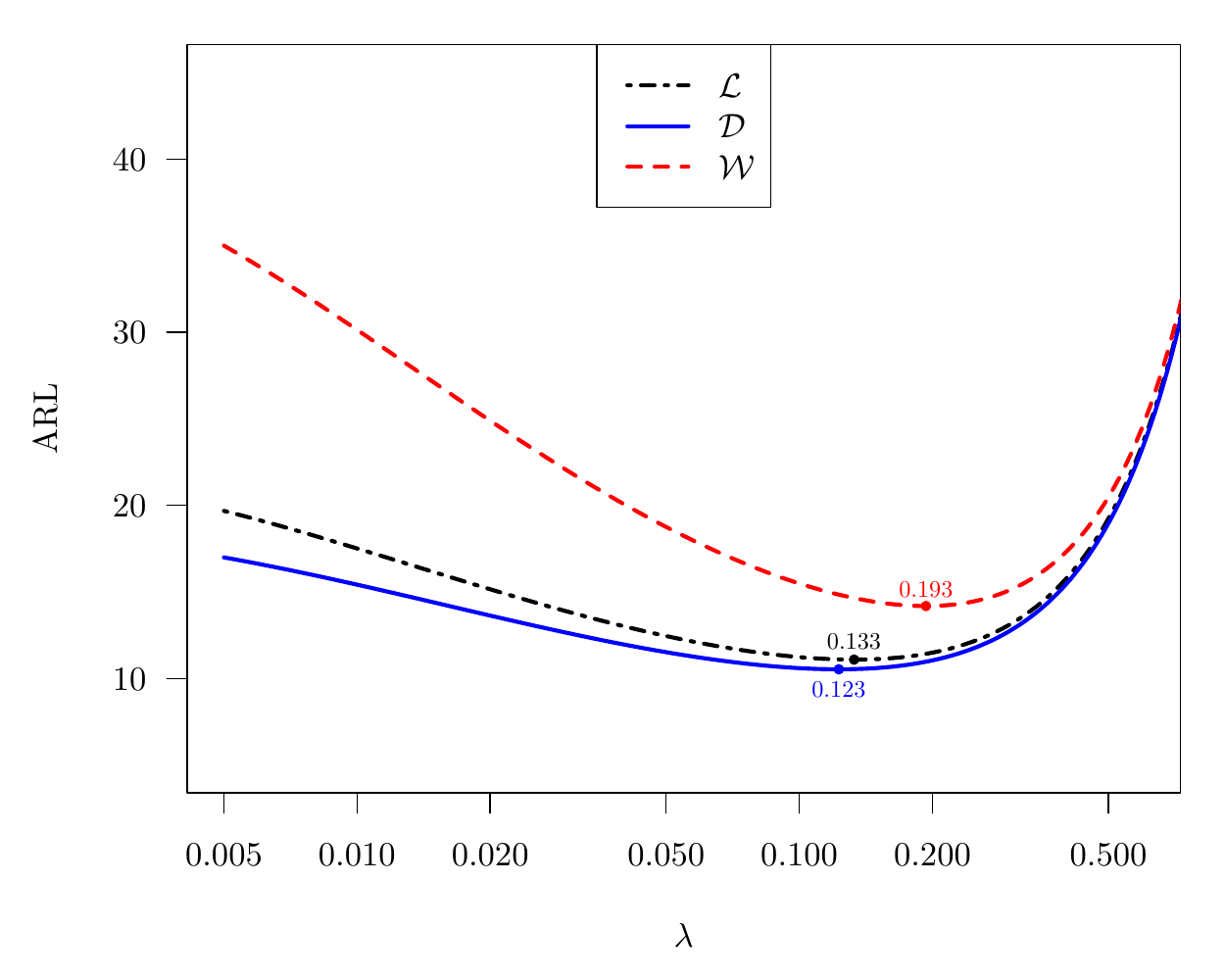}}
	
\subfloat[$p=4$]{\includegraphics[width=.5\textwidth]{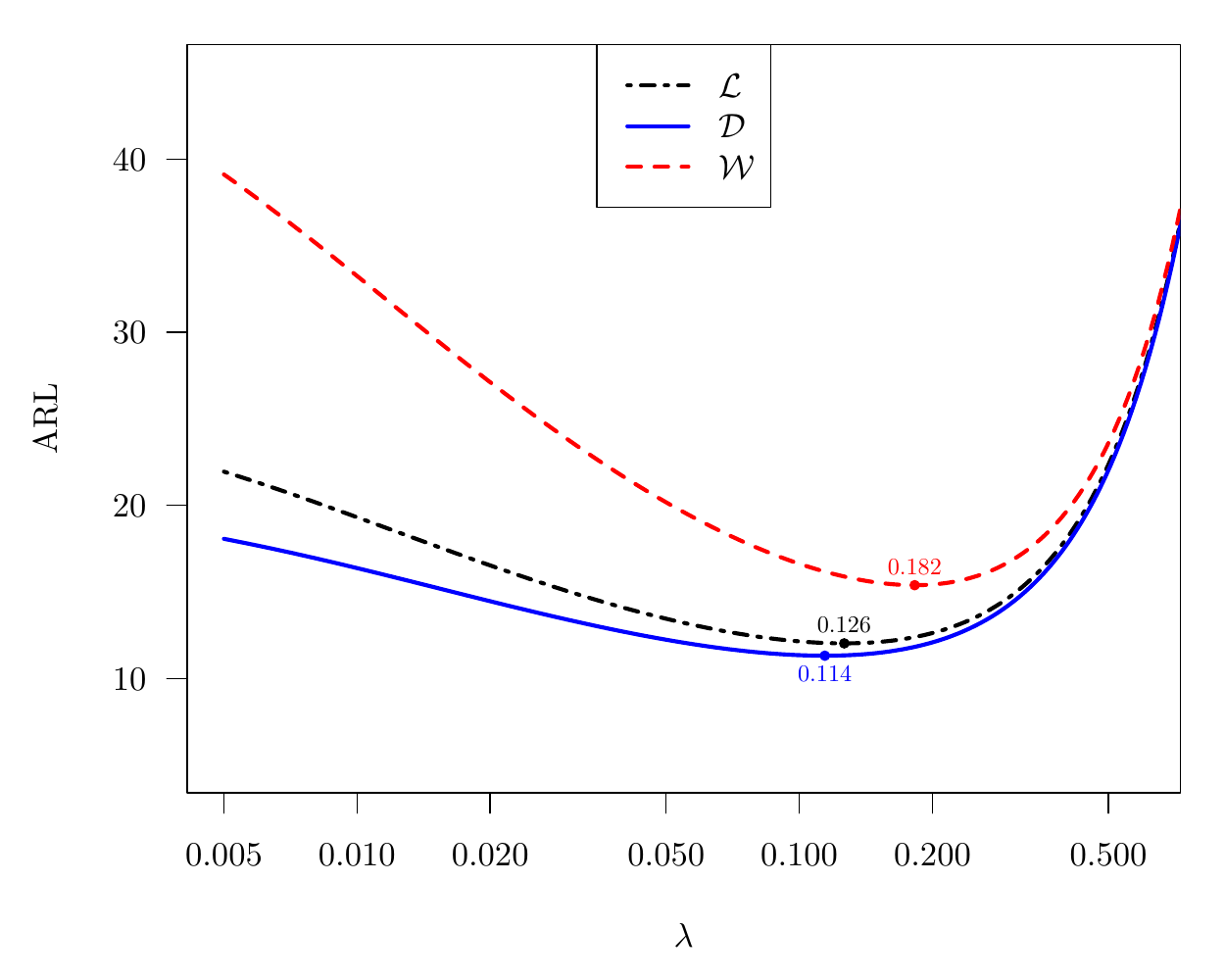}}
\subfloat[$p=10$]{\includegraphics[width=.5\textwidth]{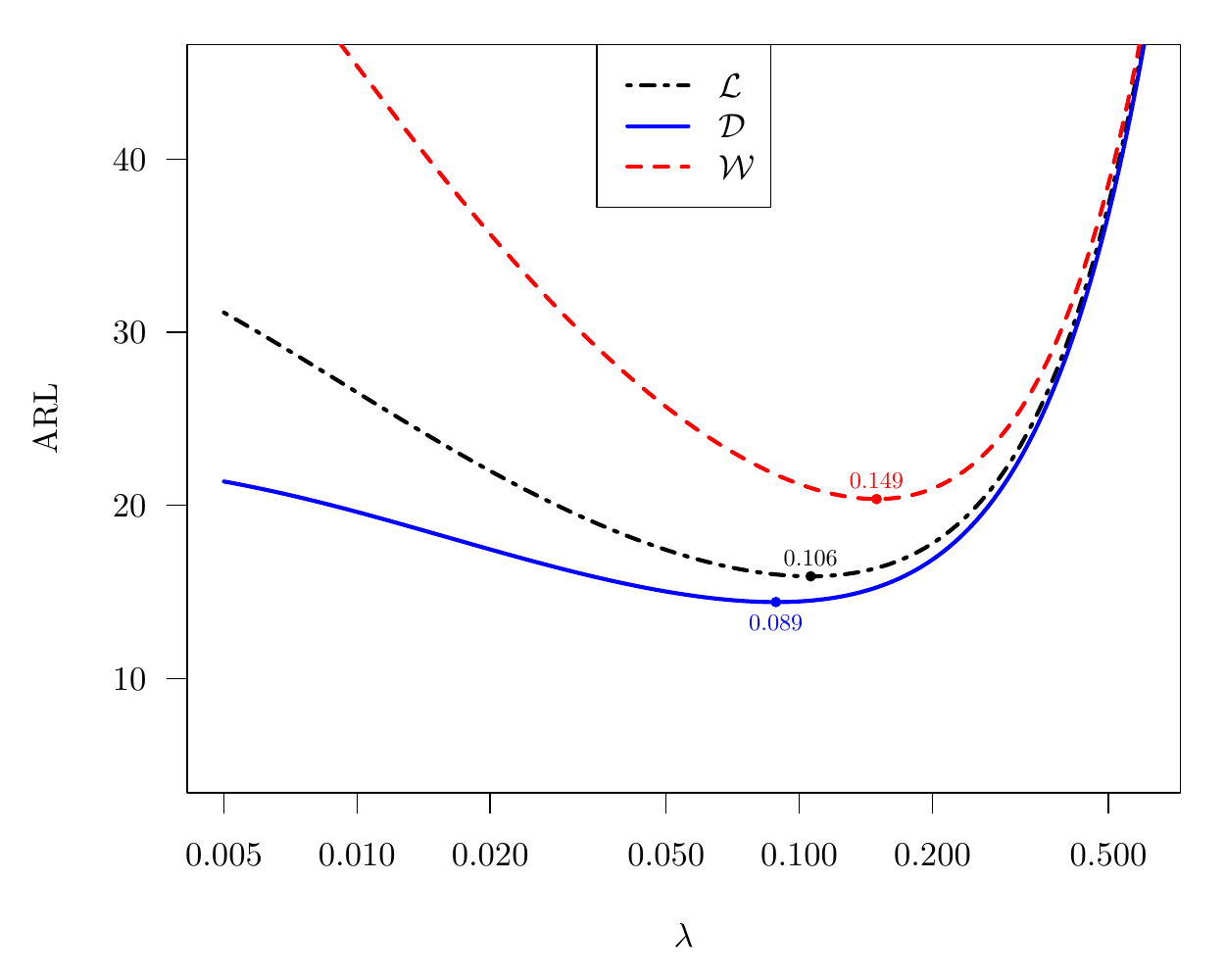}}
	
\caption{Striving for optimal $\lambda$ in terms of \textit{zero-state} $\mathcal{L}$, \textit{steady-state} $\mathcal{D}$, and \textit{worst-case} $\mathcal{W}$ of MEWMA
while detecting change from $\bm{0}$ to $\bm{\mu}_1$ with $\Vert \bm{\mu}_1 \Vert = 1$; $E_\infty(N) = 200$.}\label{fig:lambdaOpt}
\end{figure}
First, we recognize that the smallest $\lambda$ is obtained while minimizing the \textit{steady-state} ARL $\mathcal{D}$ closely followed by
the \textit{zero-state} ARL $\mathcal{L}$ one. The optimal $\lambda$ for the \textit{worst-case} ARL $\mathcal{W}$ is substantially larger.
We want to emphasize that all MEWMA curves are well below the corresponding ARL values of the Hotelling chart which are 41.9, 52.4, 61.0, and 92.5, respectively.
Even the (optimal) \textit{worst-case} results are substantially smaller (12.7, 14.2, 15.4, and 20.4, respectively).
Second, we observe like \cite{Prab:Rung:1997} that for increasing dimension $p$ the optimal $\lambda$ decreases for all three ARL types.
Interestingly, for large $p$ the profiles of $\mathcal{D}$ differ from the other ones considerably. It is even more pronounced, if $p$ becomes really large.
Going beyond $p = 30$, the profile is not convex for small $\lambda$ anymore. That is, the related values of $\mathcal{D}$ decrease with respect to $\lambda$ so
that the optimal $\lambda$ might be hidden behind $\lambda < 0.005$. This anomaly is known for dealing with minimizing $\mathcal{L}$ for one-sided
variance EWMA schemes -- see, e.\,g., \cite{Knot:2006a}. However, this time it is observed for the \textit{steady-state} ARL $\mathcal{D}$.
We remind that this peculiar behavior is caused by the patterns of the \textit{steady-state} density $\psi()$ for large $p$ illustrated in Figure~\ref{fig:map2}.
In summary, choices of $\lambda$ within $(0.1,0.2)$ which are popular in the univariate setup turn out to be also appropriate recommendations for
MEWMA. If $\delta$ is smaller or larger than 1, then, of course, $\lambda$ has to be decreased or increased accordingly.

\section{Conclusions}%\label{sec:conc}

In summary, the toolbox for calculating MEWMA ARL values is now complete. One could use either the neat Markov chain approximation introduced in \cite{Rung:Prab:1996}
and expanded in \cite{Prab:Rung:1997} for the dealing with the \textit{steady-state} ARL $\mathcal{D}$, or the highly specialized numerical algorithms for diverse integral equations proposed in
\cite{Rigd:1995a, Rigd:1995b}, modified in \cite{Knot:2017a}, and extended for $\mathcal{D}$ (and $\mathcal{D}_\star$) in this work.
For the second option, all needed routines are implemented in the R-package \texttt{spc}. We demonstrated the differences between the three
considered ARL types --- the classic \textit{zero-state} ARL $\mathcal{L}$, the \textit{conditional steady-state} ARL $\mathcal{D}$, and the
\textit{worst-case} ARL $\mathcal{W}$ --- and their different impact to the choice of the chart constant $\lambda$.
For large dimension $p$, eventually, we illustrated the odd behavior of the MEWMA statistic reaching the \textit{steady-state}.
Note that the decomposition idea in Lemma~\ref{LEM:01} could be utilized also to calculate the expected detection delays $D_\tau := E_\tau \big(N-\tau+1\mid N\ge \tau\big)$
for $\tau = (1,) 2, 3, \ldots$ The resulting sequence $\{D_\tau\}$ allows to evaluate the convergence patterns of $D_\tau \to \mathcal{D}$ and
consequently to judge the validity of the measure $\mathcal{D}$.

\appendix

\section{Linear equation systems}

First we plug in the Gau\ss{}-Legendre weights $w_i$ and nodes $z_i$ into \eqref{eq:L0igl} after replacing
$\alpha$ by $\alpha^2$, $u$ by $u^2$, and $\mathrm{d}u$ by $2u\,\mathrm{d}u$ so that we obtain
\begin{align*}
  \bm{\mathring{\ell}} & = (\mathring{\ell}_1, \ldots, \mathring{\ell}_r)^\prime
    \;,\;\; \mathring{\ell}_i = \mathcal{\mathring{L}}(z_i^2)
    \;,\;\; \mathbb{Q}_\mathcal{L} = (q_{ij}^\mathcal{L})_{i,j=1,\ldots,r} \;, \\
  & \qquad
%    \;,\;\;  
     q_{ij}^\mathcal{L} =  w_j \frac{1}{\lambda^2} f_{\chi^2}\left(\frac{z_j^2}{\lambda^2} \,\Big|\, p, \eta z_i^2\right) 2 z_j \,, \\
  \bm{\mathring{\ell}} & = (\mathbb{I} - \mathbb{Q}_\mathcal{L})^{-1} \bm{1} \,,
  \intertext{and similarly for the left eigenfunctions}
  \bm{\mathring{\psi}} & = (\mathring{\psi}_1, \ldots, \mathring{\psi}_r)^\prime
    \;,\;\; \mathring{\psi}_i = \mathring{\psi}(z_i^2)
    \;,\;\; \mathbb{Q}_\psi = (q_{ij}^\psi)_{i,j=1,\ldots,r} \; \\
  & \qquad   
%    \;,\;\;
    q_{ij}^\psi =  w_j \frac{1}{\lambda^2} f_{\chi^2}\left(\frac{z_i^2}{\lambda^2} \,\Big|\, p, \eta z_j^2\right) 2 z_j \,, \\
  \varrho \bm{\mathring{\psi}} & = \mathbb{Q}_\psi \bm{\mathring{\psi}}  \qquad \text{... power method or standard eigenvalue procedure.} \\
  \bm{\mathring{\psi}}_\star & = (\mathring{\psi}_1^\star, \ldots, \mathring{\psi}_r^\star)^\prime
    \;,\;\; \mathring{\psi}_i^\star = \mathring{\psi}_\star(z_i^2)
    \;,\;\; \bm{\mathring{\psi}}_\star = (\mathbb{I} - \mathbb{Q}_\psi)^{-1} \bm{f} \qquad \text{... see \eqref{eq:psi0ble}.}  
\end{align*}
Note that $\bm{\mathring{\psi}}$ and $\bm{\mathring{\psi}}_\star$ rely on the same matrix $\mathbb{Q}_\psi$. Applying the Markov chain approximation
as in \cite{Prab:Rung:1997}, one would observe $\mathbb{Q}_\psi = \mathbb{Q}_\mathcal{L}^\prime$ for the corresponding Markov chain transition matrix.

\section{Transformation of ARL integral equation}

\begin{align*}
  \intertext{Change integration order in}
  \mathcal{L}(\alpha,\beta)
  & = 1 + \int_{-\sqrt{\delta h}}^{\sqrt{\delta h}} \int_{v^2/\delta}^h \mathcal{L}(u,v) K(u,v;\alpha,\beta) \,\mathrm{d}u\,\mathrm{d}v \\
  & = 1 + \int_0^h \int_{-\sqrt{\delta u}}^{\sqrt{\delta u}}  \mathcal{L}(u,v) K(u,v;\alpha,\beta) \,\mathrm{d}v\,\mathrm{d}u \,, \\
  K(u,v;\alpha,\beta) & = \frac{1}{\sqrt{2\pi\delta\lambda^2}}
  e^{-\frac{[v-\lambda\delta-(1-\lambda)\beta]^2}{2\delta\lambda^2}} 
  \frac{1}{\lambda^2} f_{\chi^2}\left(\!\frac{u-v^2/\delta}{\lambda^2} \,\Big|\, p-1, \eta (\alpha-\beta^2/\delta)\!\right) \,. 
  \intertext{Change second argument}
  \beta & = \sqrt{\alpha \delta} \, \gamma \;,\;\; v = \sqrt{u \delta} \, w \;,\;\;
  \mathrm{d}v = \sqrt{u \delta} \, \mathrm{d}w \;,\;\; w = v / \sqrt{\delta u} \,, \\ 
  \intertext{so that}
  \mathcal{L}(\alpha,\gamma) & = 1 + \int_0^h \int_{-1}^1 \mathcal{L}(u,w) K^\dag(u,w;\alpha,\gamma) \,\mathrm{d}w\,\mathrm{d}u \,, \\
  K^\dag(u,w;\alpha,\gamma) & = \frac{\sqrt{u}}{\sqrt{2\pi\lambda^2}}
  e^{-\frac{[\sqrt{u} w - \lambda\sqrt{\delta} - (1-\lambda)\sqrt{\alpha} \gamma]^2}{2\lambda^2}} \\
  & \qquad \times
  \frac{1}{\lambda^2} f_{\chi^2}\left(\frac{u(1-w^2)}{\lambda^2} \,\Big|\, p-1, \eta \alpha(1-\gamma^2) \right) \,.
\end{align*}

\section{Numerics of ARL integral equation}

Let $(z^{(0)}_i, w^{(0)}_i)$ and $(z^{(1)}_j, w^{(1)}_j)$ be the quadrature nodes and weights on $[0,h]$ and $[-1,1]$, respectively.
Then we solve the following linear equation system(s).
\begin{align*}
 \mathcal{L}_{ij} & = \mathcal{L}(z^{(0)}_i, z^{(1)}_j) \\
 %\intertext{w/o substitutions:}
  & = 1 + \frac{1}{\lambda^3\sqrt{2\pi}} \sum_{k=1}^N w^{(0)}_k \sqrt{z^{(0)}_k} \sum_{l=1}^N  w^{(1)}_l
  e^{-\frac{\left[\sqrt{z^{(0)}_k} z^{(1)}_l - \lambda\sqrt{\delta} - (1-\lambda)\sqrt{z^{(0)}_i} z^{(1)}_j\right]^2}{2\lambda^2}} \ldots \\
  & \qquad \times f_{\chi^2}\left(\frac{z^{(0)}_k\big(1-(z^{(1)}_l)^2\big)}{\lambda^2} \,\Big|\, p-1, \eta z^{(0)}_i\big(1-(z^{(1)}_j)^2\big) \right) \,. \\
%   
%  \intertext{w/ $u$ to $u^2$ in \eqref{eq:biLnew}}
%  &  = 1 + \frac{1}{\lambda^3\sqrt{2\pi}} \sum_{k=1}^N w^{(0)}_k 2 (z^{(0)}_k)^2\sum_{l=1}^N  w^{(1)}_l
%  e^{-\frac{\left[z^{(0)}_k z^{(1)}_l - \lambda\sqrt{\delta} - (1-\lambda)z^{(0)}_i z^{(1)}_j\right]^2}{2\lambda^2}} \ldots \\
%  & \qquad \times f_{\chi^2}\left(\frac{(z^{(0)}_k)^2\big(1-(z^{(1)}_l)^2\big)}{\lambda^2} \,\Big|\, p-1, \eta (z^{(0)}_i)^2\big(1-(z^{(1)}_j)^2\big) \right) \,. \\
  %   
  \intertext{w/ $u$ to $u^2$ and $w$ to $\sin(w)$ in \eqref{eq:biLnew}:}
  &  = 1 + \frac{1}{\lambda^3\sqrt{2\pi}} \sum_{k=1}^N w^{(0)}_k 2 (z^{(0)}_k)^2\sum_{l=1}^N  w^{(1)}_l \cos(z^{(1)}_l) \,
  e^{-\frac{\left[z^{(0)}_k \sin(z^{(1)}_l) - \lambda\sqrt{\delta} - (1-\lambda)z^{(0)}_i \sin(z^{(1)}_j)\right]^2}{2\lambda^2}} \ldots \\
  & \qquad \times f_{\chi^2}\left(\frac{(z^{(0)}_k)^2\big(\cos(z^{(1)}_l)\big)^2}{\lambda^2} \,\Big|\, p-1, \eta (z^{(0)}_i)^2\big(\cos(z^{(1)}_j)\big)^2 \right) \,.
\end{align*}

\section{Angle distribution}

From standard math literature we obtain for the surface area on the unit sphere $S^{p-1}$
\begin{equation*}
 A_{p-1} = \frac{2 \pi^{\frac{p}{2}}}{\Gamma\left(\frac{p}{2}\right)} \;,\;\; p = 1, 2, \ldots
\end{equation*}
This surface is assembled by a continuous set of circles of latitude which are spheres of one dimension less whose
radius depends on the latitude. Their area depending on latitude $\theta \in [0,\pi]$ with related radius $\sin(\theta)$ follows
\begin{equation*}
  \tilde{A}_{p-2}(\theta) = \frac{2 \pi^{\frac{p-1}{2}}}{\Gamma\left(\frac{p-1}{2}\right)} \sin(\theta)^{p-2} \;,\;\; p = 2, 3, \ldots
\end{equation*}
Now we derive the density of $\theta$ using the proportion of $\tilde{A}_{p-2}(\theta)$ relative to $A_{p-1}$:
\begin{equation*}
   d(\theta)
   = \frac{\tilde{A}_{p-2}(\theta)}{A_{p-1}}
   = \frac{\Gamma\left(\frac{p}{2}\right)}{\Gamma\left(\frac{p-1}{2}\right)\sqrt{\pi}} \sin(\theta)^{p-2} \,.
\end{equation*}
Applying the transformation $\gamma = - \cos(\theta)$ we get
\begin{align*}
  \theta & = \text{acos}(-\gamma)
    \;,\;\; \frac{\mathrm{d}}{\mathrm{d}\theta} \gamma(\theta) = \sin(\theta) = \sin\big(\text{acos}(-\gamma)\big) = \sqrt{1-\gamma^2} \,. \\
  d(\gamma)
  %& = \frac{\Gamma(\frac{p}{2})}{\Gamma(\frac{p-1}{2})} \sin\big(\text{acos}(-\gamma)\big)^{p-2} / \sqrt{\pi} / \sqrt{1-\gamma^2} \\
  & = \frac{\Gamma(\frac{p}{2})}{\Gamma(\frac{p-1}{2})} (1-\gamma^2)^{\frac{p-3}{2}} / \sqrt{\pi}
    \;,\;\; \gamma \in (-1,1] \,.
\end{align*}

\section{Proof supporting Lemma~\ref{LEM:01}}

First, we make use of the following presentation of the non-central $\chi^2$ density, which was mentioned already in \cite{Vena:1973a}, equation (2.10):
\begin{equation*}
  f_{\chi^2}(x\mid p, \nu) = e^{-(x+\nu)/2} \frac{x^{p/2-1}}{2^{p/2} \Gamma(p/2)} \, {_0} F_1(; p/2; \nu x/4) \,.
\end{equation*}
%+
Thereby, ${_0} F_1(;b;z)$ is called confluent hypergeometric limit function and is closely related to Bessel functions.
To get an idea about ${_0} F_1()$, we give one presentation:
\begin{equation*}
  {_0} F_1(;b;z) = \sum_{n=0}^\infty \frac{z^n}{(b)_n\, n!} \; \text{ with } \; (b)_n = b (b+1) \cdots (b+n-1) \;\; \text{\small (Pochhammer symbol)} \,.
\end{equation*}
A more rigor discussion with proofs is given in \cite{Muir:1982}, Theorem 1.3.4.
Taking these subtleties aside, we start with
\begin{align*}
  \lefteqn{\lambda^2 (1-\gamma^2)^{\frac{p-3}{2}} K^\dag(u,w;\alpha,\gamma)} \\
  & = (1-\gamma^2)^{\frac{p-3}{2}} \frac{\sqrt{u}}{\sqrt{2\pi\lambda^2}}
  e^{-\frac{[\sqrt{u} w - (1-\lambda)\sqrt{\alpha} \gamma]^2}{2\lambda^2}} 
  f_{\chi^2}\left(\frac{u (1 - w^2)}{\lambda^2} \,\Big|\, p-1, \eta \alpha(1-\gamma^2) \right) \\
  & = (1-\gamma^2)^{\frac{p-3}{2}} \frac{\sqrt{u}}{\sqrt{2\pi\lambda^2}}
  e^{-\frac{u w^2 -2 \sqrt{u} w (1-\lambda) \sqrt{\alpha} \gamma + (1-\lambda)^2\alpha\gamma^2}{2\lambda^2}} \\
  & \qquad \times
  e^{-\frac{u(1-w^2) +(1-\lambda)^2 \alpha (1-\gamma^2)}{2\lambda^2}}
  \frac{ \left( \frac{u(1-w^2)}{\lambda^2} \right)^{p/2-1} }{2^{p/2} \Gamma(p/2)} \\
  & \qquad\qquad \times 
  {_0 F_1}\left(; \frac{p-1}{2}; \frac{u (1 - w^2) (1-\lambda)^2 \alpha(1-\gamma^2)}{4\lambda^4}\right) \,.
\end{align*}
Rearrange variables:
\begin{align*}
%  \lefteqn{\ldots} \\
%  
  & = (1-w^2)^{\frac{p-3}{2}} \frac{\sqrt{u}}{\sqrt{2\pi\lambda^2}}
  e^{-\frac{u \gamma^2 -2 \sqrt{u} \gamma (1-\lambda) \sqrt{\alpha} w + (1-\lambda)^2\alpha w^2}{2\lambda^2}} \\
  & \qquad \times
  e^{-\frac{u(1-\gamma^2) +(1-\lambda)^2 \alpha (1-w^2)}{2\lambda^2}}
  \frac{ \left( \frac{u(1-\gamma^2)}{\lambda^2} \right)^{p/2-1} }{2^{p/2} \Gamma(p/2)} \\
  & \qquad\qquad \times 
  {_0 F_1}\left(; \frac{p-1}{2}; \frac{u (1 - \gamma^2) (1-\lambda)^2 \alpha (1-w^2)}{4\lambda^4}\right) \\
  & = (1-w^2)^{\frac{p-3}{2}} \frac{\sqrt{u}}{\sqrt{2\pi\lambda^2}}
  e^{-\frac{[\sqrt{u} \gamma - (1-\lambda)\sqrt{\alpha} w]^2}{2\lambda^2}}
  f_{\chi^2}\left(\frac{u (1 - \gamma^2)}{\lambda^2} \,\Big|\, p-1, \eta \alpha(1-w^2) \right) \,.
\end{align*}
Now we are ready to perform the integration:
\begin{align*}
  \lefteqn{\left((1-w^2)^{\frac{p-3}{2}}\right)^{-1} \lambda^2 \int_{-1}^1 (1-\gamma^2)^{\frac{p-3}{2}} K^\dag(u,w;\alpha,\gamma) \,\mathrm{d}\gamma} \\
  & = \int_{-1}^1 \frac{\sqrt{u}}{\sqrt{2\pi\lambda^2}} e^{-\frac{[\sqrt{u} \gamma - (1-\lambda)\sqrt{\alpha} w]^2}{2\lambda^2}}
    f_{\chi^2}\left(\frac{u (1 - \gamma^2)}{\lambda^2} \,\Big|\, p-1, \eta \alpha(1-w^2) \right) \,\mathrm{d}\gamma \\
  & = \int_{-\sqrt{u}}^{\sqrt{u}} \frac{1}{\sqrt{2\pi\lambda^2}} e^{ -\frac{ [\tilde{\gamma} - (1-\lambda)\sqrt{\alpha} w]^2}{2\lambda^2} }
    f_{\chi^2}\left(\frac{u - \tilde{\gamma}^2}{\lambda^2} \,\Big|\, p-1, \eta \alpha(1-w^2) \right) \, \mathrm{d}\tilde{\gamma} \\
  & = f_{\chi^2}\left(\frac{u}{\lambda^2} \,\Big|\, p, \eta \alpha \right) \,.
\end{align*}
The last integral follows from summing two $\chi^2$ variates, $G \sim \mathcal{N} \big((1-\lambda)\sqrt{\alpha} w, \lambda^2 \big)$
and $V/\lambda^2 \sim \chi^2_{p-1, \eta \alpha (1-w^2)}$. Then for the sum we observe
$(G^2 + V)/\lambda^2 \sim \chi^2_{p, \eta \alpha}$.
Finally, moving the terms before the integral to the right-hand side yields
\begin{align*}
  \int_{-1}^{1} (1-\gamma^2)^{\frac{p-3}{2}} K^\dag(u,w;\alpha,\gamma) \,\mathrm{d}\gamma
  & = (1-w^2)^{\frac{p-3}{2}} \frac{1}{\lambda^2} f_{\chi^2}\left(\frac{u}{\lambda^2} \,\Big|\, p, \eta \alpha\right) \,.
\end{align*}

\bibliographystyle{apalike}

\bibliography{/home/knoth/common/references/sk}

%%

%% make it beautiful

\vspace*{2ex}

\noindent\makebox[\linewidth]{\rule{\textwidth}{0.4pt}}

\vspace*{2ex}

\end{document}